\newcommand{\indicator}[1]{\mathbbm{1}_{\left[ {#1} \right] }}
\def\bbbr{{\rm I\!R}}
\def\bbbn{{\rm I\!N}}
\def\Skip{\par\bigskip\nobreak}
\def\bbbr{{\rm I\!R}}
\def\bbbn{{\rm I\!N}}
\def\Skip{\par\bigskip\nobreak}
\newtheorem{example}{Example}
\newtheorem{lemma}{Lemma}
\newtheorem{corollary}{Corollary}
\newtheorem{theorem}{Theorem}
\newtheorem{definition}{Definition}
\newtheorem{proposition}{Proposition}
\def\Skip{\par\bigskip\nobreak}
\def\dj{d\kern-0.4em\char"16\kern-0.1em}
\def\Dj{\mbox{\raise0.3ex\hbox{-}\kern-0.4em D}}
\DeclareMathAlphabet{\mathpzc}{OT1}{pzc}{m}{it}
\begin{document}
\pagestyle{plain}

\makeatletter
\@addtoreset{equation}{section}
\makeatother
\renewcommand{\thesection}{\arabic{section}}
\renewcommand{\theequation}{\thesection.\arabic{equation}}
\renewcommand{\thefootnote}{\arabic{footnote}}

\setcounter{page}{1}
\setcounter{footnote}{0}

\begin{titlepage}
\begin{flushright}
\small ~~
\end{flushright}

\bigskip

\begin{center}

\vskip 0cm

{\LARGE \bf   {Level Sets Based Distances for Probability Measures and Ensembles with Applications}} \\[6mm]

\vskip 0.5cm

{\bf  Alberto Mu\~noz$^1$,
Gabriel Martos$^1$ and Javier Gonz\'alez$^2$}\\

\vskip 25pt


{\em $^1$Department of Statistics, University Carlos III of Madrid \\ Spain.  C/ Madrid, 126 - 28903, Getafe (Madrid), Spain.}\\
{\small {\tt \ {alberto.munoz@uc3m.es, gabriel.martos@uc3m.es }}}

\Skip
{\em
             $^2$Sheffield Institute for Translational Neuroscience, \\Department of Computer Science, University of Sheffield.\\
               Glossop Road S10 2HQ, Sheffield, UK.\\
}
{\small {\tt \ { j.h.gonzalez@sheffield.ac.uk }}}

\vskip 0.8cm

\end{center}

\vskip 1cm

\begin{center}

{\bf ABSTRACT}\\[3ex]

\begin{minipage}{13cm}
\small
In this paper we study Probability Measures (PM) from a functional point of view: we show that PMs can be considered as
functionals (generalized functions) that belong to some functional space endowed with an inner product.
This approach allows us to introduce a new family of distances for PMs, based on the action of the PM functionals on `interesting' functions of the sample. We propose a specific (non parametric) metric for PMs belonging to this class, based on the estimation of density level sets. Some real and simulated data sets are used to measure the performance of the proposed distance against a battery of distances widely used in Statistics and related areas.

\vspace{0.5cm}

\end{minipage}

\end{center}

\vfill

\end{titlepage}

\section{Introduction } \label{sec:intro}
Probability metrics, also known as statistical distances, are of fundamental importance in Statistics. In essence, a probability metric it is a measure that quantifies how (dis)similar are two random quantities, in particular two probability measures (PM). Typical examples of the use of probability metrics in Statistics are homogeneity, independence and goodness of fit tests. For instance there are some goodness of fit tests based on the use of the $\chi^2$ distance and others that use the Kolmogorov-Smirnoff statistics, which corresponds to the choice of the supremum distance between two PMs. There exist a large literature about probability metrics, for a summary review on interesting probability metrics and theoretical results refer to \citep{bib:deza,AMuller,VZolota} and references therein.
 
Statistical distances are also extensively used in several applications related to Machine Learning and Pattern Recognition. Several examples can be found, for instance, in Clustering \citep{nielsen2011burbea,bib:berenjae}, Image Analysis \citep{levina2001earth,rubner2000earth}, Bioinfomatics \citep{minas2013distance,saez2013comparative}, Time Series Analysis \citep{ryabko2012reducing,bib:moon} or Text Mining \citep{bib:lebanon}, just to name a few.

In practical situations we do not know explicitly the underlying distribution of the data at hand, and we need to compute a distance between probability measures by using a finite data sample. In this context, the computation of a distance between PMs that rely on the use of non-parametric density estimations often is computationally difficult and the rate of convergence of the estimated distance is usually slow \citep{bib:nguy1,wang2005divergence,stone1980optimal}. In this work we extend the preliminary idea presented in \citep{munoz2012new}, that consist in considering PMs as points in a functional space endowed with an inner product. We derive then different distances for PMs from the metric structure inherited from the ambient inner product. We propose particular instances of such metrics for PMs based on the estimation of density level sets regions avoiding in this way the difficult task of density estimation.

This article is organized as follows: In Section \ref{sec:distances} we review some distances for PMs and represent probability measures as generalized functions; next we define general distances acting on the Schwartz distribution space that contains the PMs. Section \ref{sec:metriconls} presents a new distance built according to this point of view. Section \ref{sec:exp} illustrates the theory with some simulated and real data sets. Section \ref{sec:concl} concludes.

\section{Distances for probability distributions}	
\label{sec:distances}
Several well known statistical distances and divergence measures are special cases of $f$-divergences \citep{bib:csisz}. 
Consider two PMs, say $\mathbb{P}$ and $\mathbb{Q}$, defined on a measurable space $(X, \mathcal{F},\mu)$, where $X$ is a sample space, $\mathcal{F}$ a $\sigma$-algebra of measurable subsets of $X$ and $\mu:\mathcal{F}\rightarrow \bbbr^+$ the Lebesgue measure. For a convex function $f$ and assuming that $\mathbb{P}$ is absolutely continuous with respect to $\mathbb{Q}$, then the $f$-divergence from
$\mathbb{P}$ to $\mathbb{Q}$ is defined by:
\begin{equation}\label{eq:fdiv}
d_{f}(\mathbb{P},\mathbb{Q})=\int\limits_{X}f \left( \frac{d\mathbb{P}}{d\mathbb{Q}}\right) d\mathbb{Q}.
\end{equation}
Some well known particular cases: for $f(t)=\frac{|t-1|}{2}$ we obtain the {\it Total Variation} metric; $f(t)=(t-1)^2$ yields the { \it $\chi^2$}-distance; $f(t)=(\sqrt{t}-1)^2$ yields the {\it Hellinger} distance.

The second important family of dissimilarities between probability distributions is made up of Bregman Divergences: Consider a continuously-differentiable real-valued and strictly convex function $\varphi$ and define:
\begin{equation}
d_{\varphi}(\mathbb{P},\mathbb{Q})=\int\limits_{X}\left(\varphi(p)-\varphi(q)-(p-q)\varphi'(q)\right)d\mu(x),
\end{equation}

where $p$ and $q$ represent the density functions for $\mathbb{P}$ and $\mathbb{Q}$ respectively and $\varphi'(q)$ is the derivative of $\varphi$ evaluated at $q$ (see \citep{bib:friyik, bib:ciko} for further details). Some examples of Bregman divergences: $\varphi(t)=t^2$ $d_{\varphi}(\mathbb{P},\mathbb{Q})$ yields the Euclidean distance between $p$ and $q$  (in $L_2$); $\varphi(t)=t\log(t)$ yields the {\it Kullback Leibler} (KL) Divergence; and for $\varphi(t)= -\log(t)$ we obtain the {\it Itakura}-{\it Saito} distance.
In general $d_{f}$ and $d_{\varphi}$ are not metrics because the lack of symmetry and because they do not necessarily satisfy the triangle inequality. 

A third interesting family of PM distances are integral probability metrics (IPM) \citep{VZolota,AMuller}. Consider a class of real-valued bounded measurable functions on $X$, say $\mathcal H$, and define the IPM between $\mathbb{P}$ and $\mathbb{Q}$ as
\begin{equation}\label{eq:ipmm}
d_{\mathcal{H}}(\mathbb{P},\mathbb{Q})=\sup_{f \in \mathcal{H}}\left| \int f d\mathbb{P} - \int f d\mathbb{Q}\right|.
\end{equation}

If we choose $\mathcal{H}$ as the space of bounded functions such that $h \in \mathcal{H}$ if $\|h\|_{\infty}\leq1$, then $d_{\mathcal{H}}$ is the Total Variation metric; 
when $\mathcal{H}=\{\prod_{i=1}^d\indicator{(-\infty,x_i)}: x=(x_1,\dots,x_d) \in \mathbb{R}^d\}$, $d_{\mathcal{H}}$ is the Kolmogorov distance; if $\mathcal{H}=\{e^{\sqrt{-1}\langle \omega,\; \cdot \;\rangle}: \omega \in \mathbb{R}^d\}$ the metric computes the maximum difference between characteristics functions. In \citep{Sriperumb_HSEmbsd} the authors propose to choose $\mathcal{H}$ as a Reproducing Kernel Hilbert Space and study conditions on $\mathcal{H}$ to obtain proper metrics $d_{\mathcal{H}}$.

In practice, the obvious problem to implement the above described distance functions is that we do not know the density (or distribution) functions corresponding to the samples under consideration. For instance suppose we
want to estimate the KL divergence (a particular case of Eq. \eqref{eq:fdiv} taking $f(t) = -\log t$) between two continuous distributions $\mathbb{P}$ and $\mathbb{Q}$ from two given samples. In order to do this we must choose a number of regions, $N$, and then estimate the density functions for $\mathbb{P}$ and $\mathbb{Q}$ in the $N$ regions to yield the following estimation:
\begin{equation}
\widehat{KL}(\mathbb{P},\mathbb{Q})=\sum\limits_{i=1}^{N}\hat{p_i}\log\frac{\hat{p_i}}{\hat{q_i}},
\end{equation}
see further details in \citep{bib:boltz}.

As it is well known, the estimation of general distribution functions becomes intractable as dimension arises. This motivates the need of  metrics for probability distributions that do not explicitly rely on the estimation of the corresponding probability/distribution functions. For further details on the sample versions of the above described distance functions and their computational subtleties see \citep{bib:Scott,bib:cha,wang2005divergence,bib:nguy1,Sriperumb_NonparEstim,bib:gorian, bib:rizzo}
and references therein.

To avoid the problem of explicit density function calculations we will adopt the perspective of the generalized function theory of Schwartz (see \citep{bib:zemanian}, for instance), where a function is not specified by its values but by its behavior as a functional on some space of testing functions.

\subsection{Probability measures as Schwartz distributions } \label{sec:distributions}
Consider a measure space $(X, \mathcal{F},\mu)$, where $X$ is a sample space, here a compact set\footnote{A not restrictive assumption in real scenarios, see for instance \citep{moguerza2006support}.} of a real vector space: $X \subset \mathbb{R}^d$, $\mathcal{F}$ a $\sigma$-algebra of measurable subsets of $X$ and $\mu:\mathcal{F}\rightarrow \bbbr^+$ the ambient $\sigma$-additive measure (here the Lebesgue measure). A probability measure $\mathbb{P}$ is a $\sigma$-additive finite measure absolutely continuous w.r.t. $\mu$ that satisfies the three Kolmogorov axioms. By Radon-Nikodym theorem, there exists a measurable function $f:X \rightarrow \bbbr^+$ (the density function) such that $\mathbb{P}(A) = \int_A f d\mu$, and $f=\frac{d\mathbb{P}}{d\mu}$ is the Radon-Nikodym derivative.

A PM can be regarded as a Schwartz distribution (a generalized function, see \citep{bib:strichartz} for an introduction to Distribution Theory):
We consider a vector space $\mathcal{D}$ of test functions. The usual choice for $\mathcal{D}$ is the subset of $C^{\infty}(X)$ made up of functions with compact support.  A distribution (also named generalized function) is a continuous linear functional on  $\mathcal{D}$. A probability measure can be regarded as a Schwartz distribution $\mathbb{P}:\mathcal{D} \rightarrow \bbbr$ by defining $\mathbb{P}(\phi) = \left<\mathbb{P},\phi\right> = \int \phi d\mathbb{P} = \int \phi(x) f(x) d\mu(x) = \left<\phi,f\right>$. When the density function $f\in \mathcal{D}$, then $f$ acts as the representer in the Riesz representation theorem:
 $\mathbb{P}(\cdot) =  \left<\cdot,f\right>$.

In particular, the familiar condition $\mathbb{P}(X)=1$ is equivalent to $\left<\mathbb{P},\indicator{X}\right> = 1$, where the function $\indicator{X}$ belongs to $\mathcal{D}$, being $X$ compact. Note that we do not need to impose that $f \in \mathcal{D}$; only the integral $\left<\phi,f\right>$ should be properly defined for every $\phi \in \mathcal{D}$.

Hence a probability measure/distribution is a continuous linear functional acting on a given function space. Two given linear functionals $\mathbb{P}_1$ and $\mathbb{P}_2$ will be identical (similar) if they act identically (similarly) on every $\phi \in \mathcal{D}$. For instance, if we choose $\phi = Id$, $\mathbb{P}_1(\phi) = \left<f_{\mathbb{P}_1},x\right> =  \int x d\mathbb{P} = \mu_{\mathbb{P}_1}$ and if $\mathbb{P}_1$ and $\mathbb{P}_2$ are `similar' then $\mu_{\mathbb{P}_1} \simeq \mu_{\mathbb{P}_2}$ because $\mathbb{P}_1$ and $\mathbb{P}_2$ are continuous functionals. Similar arguments apply for variance (take $\phi(x) = (x-\mu)^2$) and in general for higher order moments.
For $\phi_{\xi}(x) = e^{ix\xi}$, $\xi \in \bbbr$, we obtain the Fourier transform of the probability measure (called characteristic functions in Statistics), given by $\hat P(\xi) = \left<\mathbb{P},e^{ix\xi} \right> = \int e^{ix\xi} d\mathbb{P}$.

Thus, two PMs can be identified with their action as functionals on the test functions if the set of test functions $\mathcal{D}$ is rich enough and hence, distances between two distributions can be defined from the differences between functional evaluations for appropriately chosen test functions.

\begin{definition}\textbf{(Identification of PM's).} 
Let $\mathcal{D}$ be a set of test functions and $\mathbb{P}$ and $\mathbb{Q}$ two PM's defined on the measure space $(X, \mathcal{F},\mu)$, then we say that $\mathbb{P}=\mathbb{Q}$ on $\mathcal{D}$ if:
$$\left<\mathbb{P},\phi \right> = \left<\mathbb{Q},\phi \right> \quad \forall \phi \in \mathcal{D}.$$
\end{definition}

The key point in our approach is that if we appropriately choose a finite subset of test functions $\{\phi_i\}$, we can compute the distance between the probability measures by calculating a finite number of functional evaluations. 
In the next section we demonstrate that when $\mathcal{D}$ is composed by indicator functions that indicates the regions where the density remains constant, then the set $\mathcal{D}$ is rich enough to identify PM. In the next section we define a distance based on the use of this set of indicator functions. 

\section{A metric based on the estimation of level sets}
\label{sec:metriconls}
We choose $\mathcal{D}$ as $C_c(X)$, the space of all compactly supported, piecewise continuous functions on $X$ (compact), as test functions (remember that $C_c(X)$ is dense in $L_p$). Given two PMs $\mathbb{P}$ and $\mathbb{Q}$, we consider a family of test functions $\{\phi_{i}\}_{i \in I} \subseteq \mathcal{D}$ and then define distances between $\mathbb{P}$ and $\mathbb{Q}$ by weighting terms of the type $d\left(\left<\mathbb{P},\phi_{i} \right>, \left<\mathbb{Q},\phi_{i} \right>\right)$ for ${i \in I}$, where $d$ is some distance function. Our test functions will be indicator functions of $\alpha$-level sets, described below.

Given a PM $\mathbb{P}$ with density function $f_\mathbb{P}$, minimum volume sets are defined by $S_{\alpha}(f_\mathbb{P}) = \{ x \in X|\, f_\mathbb{P}(x) \ge \alpha\}$, such that $P(S_{\alpha}(f_\mathbb{P})) = 1-\nu\,$, where $0 < \nu < 1$. 
If we consider an ordered sequence $0\leq \alpha_1 < \ldots < \alpha_m$, then $S_{\alpha_{i+1}}(f_\mathbb{P}) \subseteq S_{\alpha_{i}}(f_\mathbb{P})$. Let us define the $\alpha_i$-level set: $A_i(\mathbb{P}) = S_{\alpha_i}(f_\mathbb{P}) - S_{\alpha_{i+1}}(f_\mathbb{P})$, $i \in\{1,\ldots,m-1\}$. 
We can choose $\alpha_1\simeq 0$ and $\alpha_m \ge \max_{x\in X} f_\mathbb{P}(x)$ (which exists, given that $X$ is compact and $f_\mathbb{P}$
piecewise continuous); then $\bigcup_i A_i(\mathbb{P}) \simeq \operatorname{Supp}(\mathbb{P}) = \{x \in X |\, f_\mathbb{P}(x) \ne 0\}$ (equality takes place when $m\rightarrow \infty$, $\alpha_1\rightarrow 0$ and $\alpha_m\rightarrow \max_{x\in X} f_\mathbb{P}(x)$).  Given the definition of the $A_i$, if $A_i(\mathbb{P}) = A_i(\mathbb{Q})$ for every $i$ when  $m\rightarrow\infty$, then $\mathbb{P} = \mathbb{Q}$. We formally prove this proposition with the aid of the following theorem.

\begin{definition}{\textbf{($\boldsymbol \alpha_\mathbb{P}^m$ sequence).}}
Given a PM $\mathbb{P}$ defined on the measure space $(X, \mathcal{F},\mu)$, with density function $f_\mathbb{P}$ and $m \in  \mathbb{N}$, define $\boldsymbol \alpha_\mathbb{P}^m=\{\alpha_1 , \dots, \alpha_m \}$ where $0=\alpha_1 < \ldots < \alpha_m = \max_xf_\mathbb{P}(x)$.
\end{definition}

\begin{theorem}{(\textbf{$\alpha$-level set representation of a PM).}}\label{th1}
Given a PM $\mathbb{P}$ defined on the measure space $(X, \mathcal{F},\mu)$, with density function $f_\mathbb{P}$ and a sequence $\boldsymbol \alpha_\mathbb{P}^m$, 
consider the set of indicator functions $\phi_{i,\mathbb{P}} = \indicator{A_i(\mathbb{P})}:X\rightarrow \{0,1\}$ of the $\alpha$-level sets $A_i(\mathbb{P}) = S_{\alpha_i}(f_\mathbb{P}) - S_{\alpha_{i+1}}(f_\mathbb{P})$ for $i \in \{1,\dots,m-1\}$. Define $ f_m(x)= \sum_{i=1}^{m}\alpha_{i}\phi_{i,\mathbb{P}}(x)$. Then:
$$ \lim_{m\rightarrow \infty} f_{m}(x)= f_\mathbb{P}(x),$$
\noindent
where the convergence is pointwise almost everywhere. Moreover, as the sequence $f_{m}$ is monotonically increasing ($f_{m-1}\leq f_m$), by Dini's Theorem, the convergence is also uniform (converge uniformly almost everywhere). 
\end{theorem}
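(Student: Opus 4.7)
The plan is to exploit the fact that the indicator functions $\phi_{i,\mathbb{P}} = \indicator{A_i(\mathbb{P})}$ come from a partition of the support of $f_\mathbb{P}$ determined by the ordered heights $\alpha_1 < \cdots < \alpha_m$, so that $f_m$ is essentially a lower step-function approximation of $f_\mathbb{P}$ whose error is controlled by the mesh $\Delta_m = \max_{1\le i\le m-1}(\alpha_{i+1}-\alpha_i)$ of the sequence $\boldsymbol\alpha_\mathbb{P}^m$.

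First, I would observe that for distinct $i,j \in \{1,\dots,m-1\}$ the sets $A_i(\mathbb{P}) = S_{\alpha_i}(f_\mathbb{P}) \setminus S_{\alpha_{i+1}}(f_\mathbb{P}) = \{x : \alpha_i \le f_\mathbb{P}(x) < \alpha_{i+1}\}$ are disjoint and $\bigcup_{i=1}^{m-1} A_i(\mathbb{P}) = \{x \in X : 0 \le f_\mathbb{P}(x) < \alpha_m\}$, which coincides with $\operatorname{Supp}(\mathbb{P})$ up to the $\mu$-null set $\{f_\mathbb{P} = \alpha_m\}$. Consequently, for almost every $x \in \operatorname{Supp}(\mathbb{P})$ there is a unique index $i(x)$ with $\alpha_{i(x)} \le f_\mathbb{P}(x) < \alpha_{i(x)+1}$, and since the indicators are disjoint, $f_m(x) = \alpha_{i(x)}$. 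This gives the key pointwise bound
$$0 \le f_\mathbb{P}(x) - f_m(x) < \alpha_{i(x)+1} - \alpha_{i(x)} \le \Delta_m.$$
Selecting the sequences $\boldsymbol\alpha_\mathbb{P}^m$ so that $\Delta_m \to 0$ as $m \to \infty$ (always possible since $\alpha_1=0$ and $\alpha_m = \max_x f_\mathbb{P}(x)$ are fixed), one deduces $f_m(x) \to f_\mathbb{P}(x)$ pointwise almost everywhere, which is the first conclusion.

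For the monotonicity claim, I would restrict attention to nested refinements, that is, choose $\boldsymbol\alpha_\mathbb{P}^{m+1}$ by inserting one new threshold into $\boldsymbol\alpha_\mathbb{P}^m$. On any $A_i(\mathbb{P})$ of the coarser partition, the finer partition replaces the constant value $\alpha_i$ by either $\alpha_i$ itself or the new inserted threshold $\alpha' > \alpha_i$; hence $f_{m+1}(x) \ge f_m(x)$ for all $x$, proving the required monotonicity. Combined with the pointwise convergence above, $\{f_m\}$ is a monotonically increasing sequence of piecewise-constant functions converging to $f_\mathbb{P}$ on the compact set $X$.

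The last step is to invoke Dini's theorem, which yields uniform convergence provided the monotone sequence of continuous functions converges to a continuous limit on a compact set. The main obstacle is precisely this hypothesis, since $f_\mathbb{P}$ is only \emph{piecewise} continuous and the $f_m$ are only piecewise constant. I would handle this by applying Dini on each of the finitely many compact pieces where $f_\mathbb{P}$ is continuous (after a harmless modification of $f_m$ at the partition boundaries, which form a $\mu$-null set), thereby obtaining uniform convergence on each piece and hence uniform convergence almost everywhere on $X$, as stated.
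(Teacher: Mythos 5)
Your proposal is correct and follows essentially the paper's own argument: almost every $x$ lies in exactly one band $A_{i(x)}(\mathbb{P})$, so $f_m(x)=\alpha_{i(x)}$ and the error is controlled by the mesh of $\boldsymbol\alpha_\mathbb{P}^m$, which is driven to zero (the paper takes the explicit spacing $1/m$). Two minor remarks: your uniform bound $0\le f_\mathbb{P}(x)-f_m(x)<\Delta_m$ already yields uniform convergence directly, so the appeal to Dini's theorem (which the paper also makes, loosely) is unnecessary --- and your piecewise patch would not rescue it in any case, since the approximants $f_m$ are step functions and hence not continuous even on the pieces where $f_\mathbb{P}$ is --- while your observation that monotonicity $f_{m}\le f_{m+1}$ requires nested refinements of the $\alpha$-sequences is a point the paper merely asserts.
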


\begin{corollary}{\textbf{($\alpha$-level sets identification of PMs).}} \label{th2}
If the set of test functions $\mathcal{D}$ contains the indicator functions of the $\alpha$-level sets, then $\mathcal{D}$ is rich enough to discriminate among PMs.
\end{corollary}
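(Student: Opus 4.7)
The plan is to unpack what it means for $\mathcal{D}$ to be \emph{rich enough to discriminate among PMs}, namely: whenever $\langle \mathbb{P},\phi\rangle = \langle \mathbb{Q},\phi\rangle$ for every $\phi \in \mathcal{D}$, one must conclude $\mathbb{P}=\mathbb{Q}$. So I would assume this equality, and aim to show $f_\mathbb{P} = f_\mathbb{Q}$ almost everywhere with respect to the ambient measure $\mu$, which immediately gives $\mathbb{P}=\mathbb{Q}$ as PMs.

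The engine of the argument is Theorem \ref{th1}. Choosing a sequence $\boldsymbol{\alpha}_\mathbb{P}^m$ for $\mathbb{P}$, it produces indicator functions $\phi_{i,\mathbb{P}}=\indicator{A_i(\mathbb{P})}$, all of which belong to $\mathcal{D}$ by hypothesis, and the step function $f_m^\mathbb{P}(x)=\sum_{i=1}^m \alpha_i\phi_{i,\mathbb{P}}(x)$ converges uniformly almost everywhere to $f_\mathbb{P}$. Doing the same for $\mathbb{Q}$ yields $f_m^\mathbb{Q}\to f_\mathbb{Q}$ uniformly a.e. Since $X$ is compact and $\mu$ finite, uniform a.e.\ convergence lets me interchange limit and integration against either measure, and this is essentially where the work lies.

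I would then compute $\langle \mathbb{Q},f_\mathbb{P}\rangle$ in two ways. Directly, $\langle \mathbb{Q},f_\mathbb{P}\rangle = \int f_\mathbb{P} f_\mathbb{Q}\, d\mu$. Approximating via Theorem \ref{th1},
\[
\langle \mathbb{Q},f_\mathbb{P}\rangle \;=\; \lim_{m\to\infty}\sum_{i=1}^m \alpha_i \langle \mathbb{Q},\phi_{i,\mathbb{P}}\rangle \;=\; \lim_{m\to\infty}\sum_{i=1}^m \alpha_i \langle \mathbb{P},\phi_{i,\mathbb{P}}\rangle \;=\; \langle \mathbb{P},f_\mathbb{P}\rangle \;=\; \int f_\mathbb{P}^2\, d\mu,
\]
where the middle equality invokes the hypothesis on the test functions $\phi_{i,\mathbb{P}}\in\mathcal{D}$. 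Exchanging the roles of $\mathbb{P}$ and $\mathbb{Q}$ and using the indicators $\phi_{i,\mathbb{Q}}\in\mathcal{D}$ instead, the same calculation yields $\int f_\mathbb{P} f_\mathbb{Q}\, d\mu = \int f_\mathbb{Q}^2\, d\mu$. Adding the two and subtracting twice the cross term,
\[
\int (f_\mathbb{P}-f_\mathbb{Q})^2\, d\mu \;=\; \int f_\mathbb{P}^2\, d\mu - 2\int f_\mathbb{P} f_\mathbb{Q}\, d\mu + \int f_\mathbb{Q}^2\, d\mu \;=\; 0,
\]
so $f_\mathbb{P}=f_\mathbb{Q}$ $\mu$-a.e., and hence $\mathbb{P}=\mathbb{Q}$.

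The main obstacle I anticipate is the justification for passing the limit through the integrals $\langle \mathbb{Q},\cdot\rangle$ and $\langle \mathbb{P},\cdot\rangle$: one needs uniform convergence together with integrability of the dominating function $f_\mathbb{P}$ (respectively $f_\mathbb{Q}$) against the other PM, which is precisely where compactness of $X$, piecewise continuity of the densities, and the fact that $\mathbb{P},\mathbb{Q}$ are finite measures come in. Everything else is bookkeeping with Theorem \ref{th1} and an $L^2(\mu)$ identity.
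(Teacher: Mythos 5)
Your proposal is correct, but it reaches the conclusion by a genuinely different mechanism than the paper. The paper's own proof is essentially a one-step appeal to Theorem \ref{th1}: write $f_\mathbb{P}\approx\sum_j \alpha_j\phi_j$ for the level-set indicators $\phi_j$ and assert that if the actions $\langle\phi_j,f_\mathbb{P}\rangle$ and $\langle\phi_j,f_\mathbb{Q}\rangle$ agree for all such indicators (as $m\to\infty$), then $f_\mathbb{P}=f_\mathbb{Q}$; the step from ``agreement on indicators'' to ``equality of densities'' is left implicit. You instead make that step explicit with an $L^2(\mu)$ polarization argument: evaluating $\langle\mathbb{Q},f_\mathbb{P}\rangle$ and $\langle\mathbb{P},f_\mathbb{Q}\rangle$ both directly and through the simple-function approximations of Theorem \ref{th1}, and using the hypothesis only on the indicators $\phi_{i,\mathbb{P}},\phi_{i,\mathbb{Q}}\in\mathcal{D}$, you get $\int f_\mathbb{P}f_\mathbb{Q}\,d\mu=\int f_\mathbb{P}^2\,d\mu=\int f_\mathbb{Q}^2\,d\mu$, hence $\int(f_\mathbb{P}-f_\mathbb{Q})^2\,d\mu=0$ and $f_\mathbb{P}=f_\mathbb{Q}$ $\mu$-a.e. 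What your route buys is a concrete justification of the identification step, at the modest extra cost of needing $f_\mathbb{P},f_\mathbb{Q}\in L^2(\mu)$ and the limit--integral interchange; both are available here since the densities are bounded on the compact $X$ (so dominated convergence would even suffice if one distrusts the ``uniform a.e.'' claim of Theorem \ref{th1}), and you correctly note that the hypothesis must be applied to the level-set indicators of both $\mathbb{P}$ and $\mathbb{Q}$, which is indeed how the corollary is intended. The only cosmetic blemish is the same indexing slip as in the paper (the sum should run over the $m-1$ indicators), which does not affect the argument.
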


Now we elaborate on the construction of a metric that is able to identify PM. Denote by $\mathscr{D}_X$ to the set of probability distributions on $X$ and given a suitable sequence of non-decreasing values $\{\alpha_i \}_{i=1}^m$, define: $\mathscr{D}_X \xrightarrow{\phi_i} \mathcal{D}: \phi_i(\mathbb{P})=\indicator{A_i(\mathbb{P})}$. We propose distances of the form $\sum_{i=1}^{m-1}w_i
d\left(\phi_i(\mathbb{P}),\phi_i(\mathbb{Q})\right)$. Consider, as an example, the measure of the standardized symmetric difference:
$$d\left(\phi_i(\mathbb{P}),\phi_i(\mathbb{Q})\right) =  \frac{\mu\left(A_i(\mathbb{P}) \bigtriangleup A_i(\mathbb{Q})\right)}{\mu\left(A_i(\mathbb{P}) \cup A_i(\mathbb{Q})\right)}.$$

This motivates the definition of the $\alpha$-level set semi-metric as follows.
\begin{definition}{\textbf{(Weighted $\alpha$-level set semi-metric).}}
Given $m \in  \mathbb{N}$, consider two sequences: $\boldsymbol \alpha_\mathbb{P}^m$ and $\boldsymbol \beta_\mathbb{Q}^m$, for $\mathbb{P}$ and $\mathbb{Q}$ respectively. Then define a family of weighted $\alpha$-level set distances between $\mathbb{P}$ and $\mathbb{Q}$ by
\begin{eqnarray}\label{eq:alpha-distance}
d_{{\boldsymbol{\alpha,\beta}}}(\mathbb{P},\mathbb{Q}) &=& \sum_{i=1}^{m-1}w_i
d\left(\phi_i(\mathbb{P}),\phi_i(\mathbb{Q})\right)\\ \nonumber
& =&  \sum_{i=1}^{m-1} w_i \frac{\mu\left(A_i(\mathbb{P}) \bigtriangleup A_i(\mathbb{Q})\right)}{\mu\left(A_i(\mathbb{P}) \cup A_i(\mathbb{Q})\right)},
\end{eqnarray}  
where $w_i\dots,w_{m-1} \in \mathbb{R}^+$ and $\mu$ is the ambient measure.
\end{definition}

Equation \eqref{eq:alpha-distance} can be
interpreted as a weighted sum of Jaccard distances between the $A_i(\mathbb{P})$ and $A_i(\mathbb{Q})$ sets. For $m \gg 0$, when $\mathbb{P} \approx \mathbb{Q}$, then  $d_{{\boldsymbol{\alpha,\beta}}}(\mathbb{P},\mathbb{Q}) \approx 0$ since $\mu\left(A_i(\mathbb{P}) \bigtriangleup A_i(\mathbb{Q})\right) \approx 0$ for all $i \in \{1,\dots,m\}$ (assume $|f_{\mathbb{P}}(x) - f_{\mathbb{Q}}(x)|\leq \varepsilon$ for all $x$, since $f_{\mathbb{P}} \stackrel{\varepsilon \rightarrow 0}{=} f_{\mathbb{Q}}$ then $\mu\left(A_i(\mathbb{P}) \bigtriangleup A_i(\mathbb{Q})\right) \stackrel{\varepsilon \rightarrow 0}{\longrightarrow} 0\; \forall i$, because otherwise contradicts the fact that $f_{\mathbb{P}} \stackrel{\varepsilon \rightarrow 0}{=} f_{\mathbb{Q}}$).

\begin{proposition}\textbf{(Convergence of the $\alpha$-level set semi-metric to a metric).}
$d_{{\boldsymbol{\alpha,\beta}}}(\mathbb{P},\mathbb{Q})$ converges to a metric when $m\rightarrow \infty$.
\end{proposition}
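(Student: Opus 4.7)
The plan is to verify the four metric axioms for the limit object $\lim_{m\to\infty} d_{{\boldsymbol{\alpha,\beta}}}(\mathbb{P},\mathbb{Q})$. Three of them --- non-negativity, symmetry and the triangle inequality --- already hold at every finite $m$ and therefore pass to the limit automatically, so they only require routine work. The identity of indiscernibles is the axiom that genuinely fails at finite $m$ (many distinct densities share the same finite family of level sets), and it is precisely here that $m\to\infty$ together with Theorem \ref{th1} must be used.

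For the first three axioms I would observe that each summand in (\ref{eq:alpha-distance}) is the classical Steinhaus--Marczewski (Jaccard) distance $J(A,B)=\mu(A\bigtriangleup B)/\mu(A\cup B)$ between two measurable sets (with the convention $J(\emptyset,\emptyset)=0$). This $J$ is a bona fide metric on the measure algebra modulo null sets: non-negativity and symmetry are immediate from $A\bigtriangleup B=B\bigtriangleup A$, while the triangle inequality follows from the set-inclusion $A\bigtriangleup C\subseteq(A\bigtriangleup B)\cup(B\bigtriangleup C)$ via a short monotonicity argument. A non-negative weighted sum of metrics is again a pseudo-metric, so $d_{{\boldsymbol{\alpha,\beta}}}(\cdot,\cdot)$ inherits the three properties at every finite $m$, and these inequalities survive the limit because pointwise limits preserve $\ge 0$, symmetry and $\le$.

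The core step is then showing that $\lim_{m\to\infty} d_{{\boldsymbol{\alpha,\beta}}}(\mathbb{P},\mathbb{Q})=0$ implies $\mathbb{P}=\mathbb{Q}$. Since every term in (\ref{eq:alpha-distance}) is non-negative and every weight $w_i$ strictly positive, vanishing of the limit forces $\mu(A_i(\mathbb{P})\bigtriangleup A_i(\mathbb{Q}))\to 0$ along the refinement, hence $\indicator{A_i(\mathbb{P})}=\indicator{A_i(\mathbb{Q})}$ a.e.\ as the partition grows finer. Applying Theorem \ref{th1} to each measure, the step-function approximants $f_m^{\mathbb{P}}=\sum_i\alpha_i\,\indicator{A_i(\mathbb{P})}$ and $f_m^{\mathbb{Q}}=\sum_i\beta_i\,\indicator{A_i(\mathbb{Q})}$ converge uniformly a.e.\ to $f_{\mathbb{P}}$ and $f_{\mathbb{Q}}$. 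Coupling this with the a.e.\ equality of the level sets forces the heights $\alpha_i,\beta_i$ to agree in the limit (otherwise two different values would be assigned on a set of positive measure, contradicting uniform convergence), so $f_{\mathbb{P}}=f_{\mathbb{Q}}$ a.e.\ and $\mathbb{P}=\mathbb{Q}$.

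The main obstacle I anticipate is the bookkeeping between the two distinct grids $\boldsymbol{\alpha}_\mathbb{P}^m$ and $\boldsymbol{\beta}_\mathbb{Q}^m$: comparison in (\ref{eq:alpha-distance}) is term-by-term in $i$, but Theorem \ref{th1}'s approximation uses the intrinsic levels of each measure. To make the step "level sets equal a.e.\ $\Rightarrow$ densities equal a.e." watertight, I would take both sequences as meshes of a common interval covering $[0,\max(\|f_\mathbb{P}\|_\infty,\|f_\mathbb{Q}\|_\infty)]$ whose spacing shrinks uniformly to zero, so that both approximants live on the same grid and Theorem \ref{th1} applies simultaneously. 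Once this coupling of the discretizations is in place, the remainder is a routine uniform-convergence passage to the limit.
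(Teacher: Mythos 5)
Your proposal is correct, and its core step---deducing from the vanishing of the limit that the level sets coincide, and then invoking Theorem \ref{th1} to conclude $f_{\mathbb{P}}=f_{\mathbb{Q}}$ almost everywhere---is exactly the argument the paper gives. The differences are ones of scope, and they work in your favour. First, the paper's proof handles only the identification property (distance tends to zero iff $\mathbb{P}=\mathbb{Q}$) and never discusses the triangle inequality, whereas you ground non-negativity, symmetry and the triangle inequality in the fact that each summand of \eqref{eq:alpha-distance} is a Steinhaus--Jaccard metric on the measure algebra, so that the weighted sum is a pseudo-metric at every finite $m$ and these properties pass to the limit; note this only works when the $w_i$ are fixed positive constants, not under the sample-dependent weighting of Section \ref{sec:weightch}, a caveat worth stating. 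Second, you identify and repair a genuine looseness in the paper: since $\boldsymbol{\alpha}_{\mathbb{P}}^m$ and $\boldsymbol{\beta}_{\mathbb{Q}}^m$ are intrinsic to each measure, the paper's step ``$A_i(\mathbb{P})=A_i(\mathbb{Q})$ for all $i$, hence $f_m^{\mathbb{P}}=f_m^{\mathbb{Q}}$'' tacitly identifies the two height sequences, and your device of a common refining grid on $[0,\max(\|f_{\mathbb{P}}\|_\infty,\|f_{\mathbb{Q}}\|_\infty)]$ is what makes that step legitimate. One informality you share with the paper: both arguments pass from ``the weighted sum tends to $0$'' to ``each Jaccard term tends to $0$'', which requires the weights not to decay too fast relative to the number of terms (e.g.\ weights bounded below after normalization); a sentence making that explicit would complete the argument.
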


The semi-metric proposed in Eq. \eqref{eq:alpha-distance} obeys the following properties:  is non-negative, that is $d_{{\boldsymbol{\alpha,\beta}}}(\mathbb{P},\mathbb{Q})\geq 0$ and $\lim\limits_{m \rightarrow \infty} d_{{\boldsymbol{\alpha,\beta}}}(\mathbb{P},\mathbb{Q})=0$ if and only if $\mathbb{P}=\mathbb{Q}$. For fixed pairs $(\boldsymbol{\alpha},\mathbb{P})$ and $(\boldsymbol{\beta},\mathbb{Q})$ it is symmetric $d_{{\boldsymbol{\alpha,\beta}}}(\mathbb{P},\mathbb{Q})=d_{{\boldsymbol{\beta,\alpha}}}(\mathbb{Q},\mathbb{P})$. Therefore constitutes a proper metric when $m\rightarrow \infty$. The semi-metric proposed in Eq. \eqref{eq:alpha-distance} is invariant under affine transformations (see the Appendix B for a formal proof). In section \ref{sec:weightch} we will propose a weighting scheme for setting the weights $\{w_i\}_{i=1}^{m-1}$.

Of course, we can calculate $d_{\boldsymbol{\alpha,\beta}}$ in Eq. (\ref{eq:alpha-distance}) only when we know the distribution function for both PMs $\mathbb{P}$ and $\mathbb{Q}$. In practice there will be available two data samples generated from $\mathbb{P}$ and  $\mathbb{Q}$, and we need to define some plug in estimator: Consider estimators $\hat A_i(\mathbb{P}) = \hat S_{\alpha_i}(f_\mathbb{P}) - \hat S_{\alpha_{i+1}}(f_\mathbb{P})$ (details in subsection \ref{sec:oneclass}), then we can estimate
$d_{\boldsymbol{\alpha,\beta}}(\mathbb{P},\mathbb{Q})$ by
\begin{equation} \label{eq:alpha-distance-plugin}
\hat d_{\boldsymbol{\alpha,\beta}}(\mathbb{P},\mathbb{Q}) = \sum_{i=1}^{m-1}w_i \frac{\mu\left(\hat A_i(\mathbb{P}) \bigtriangleup \hat A_i(\mathbb{Q})\right)}{\mu\left(\hat A_i(\mathbb{P}) \cup  \hat A_i(\mathbb{Q})\right)}.
\end{equation}

It is clear that $\mu\left(\hat A_i(\mathbb{P}) \cup  \hat A_i(\mathbb{Q})\right)$ equals the total number of points in $\hat A_i(\mathbb{P}) \cup  \hat A_i(\mathbb{Q})$, say $\#\left(\hat A_i(\mathbb{P}) \cup  \hat A_i(\mathbb{Q})\right)$. 
Regarding the numerator in Eq. (\ref{eq:alpha-distance-plugin}), given two level sets, say $A$ and $B$ to facilitate the notation, and the corresponding sample estimates $\hat A$ and $\hat B$, one is tempted to estimate $\mu(A \bigtriangleup B)$, the area of region $A \bigtriangleup B$, by $\widehat{\mu(A\bigtriangleup B)} = \#(\hat A-\hat B) \cup \#(\hat B-\hat A) = \#(A\cup B) - \#(A \cap B)$. However this is incorrect since probably there will be no points in common between  $\hat A$ and $\hat B$ (which implies $\widehat{A\bigtriangleup B} = \widehat{A \cup B}$).

In our particular case, the algorithm in Table 1 shows that $\hat A_i(\mathbb{P})$ is always a subset of the sample $s_\mathbb{P}$ drawn from the density function $f_{\mathbb{P}}$, and we will denote this estimation by $s_{\hat A_i(\mathbb{P})}$ from now on. We will reserve the
notation $\hat{A_i}(\mathbb{P})$ for the covering estimation of $A_i(\mathbb{P})$ defined by
$\cup_j^n B(x_j,r_A)$ where $x_j \in s_{\hat A_i(\mathbb{P})}$, $B(x_j,r_A)$ are closed balls with centres at 
$x_j$ and (fixed) radius $r_A$ \citep{bib:devroye}. The radius is chosen to be constant (for data points in $\hat{A_i}(\mathbb{P})$)  because we can assume that density is approximately constant inside region $\hat{A_i}(\mathbb{P})$, if the partition $\{\alpha_{i}\}_{i=1}^m$ of the set is fine enough. For example, in the experimental section, we fix $r_A$ as the median distance between the points that belongs to the set $s_{\hat A_i(\mathbb{P})}$.

\begin{figure*}[t!]
  \begin{center}
    \includegraphics[height=5cm,width=14cm]{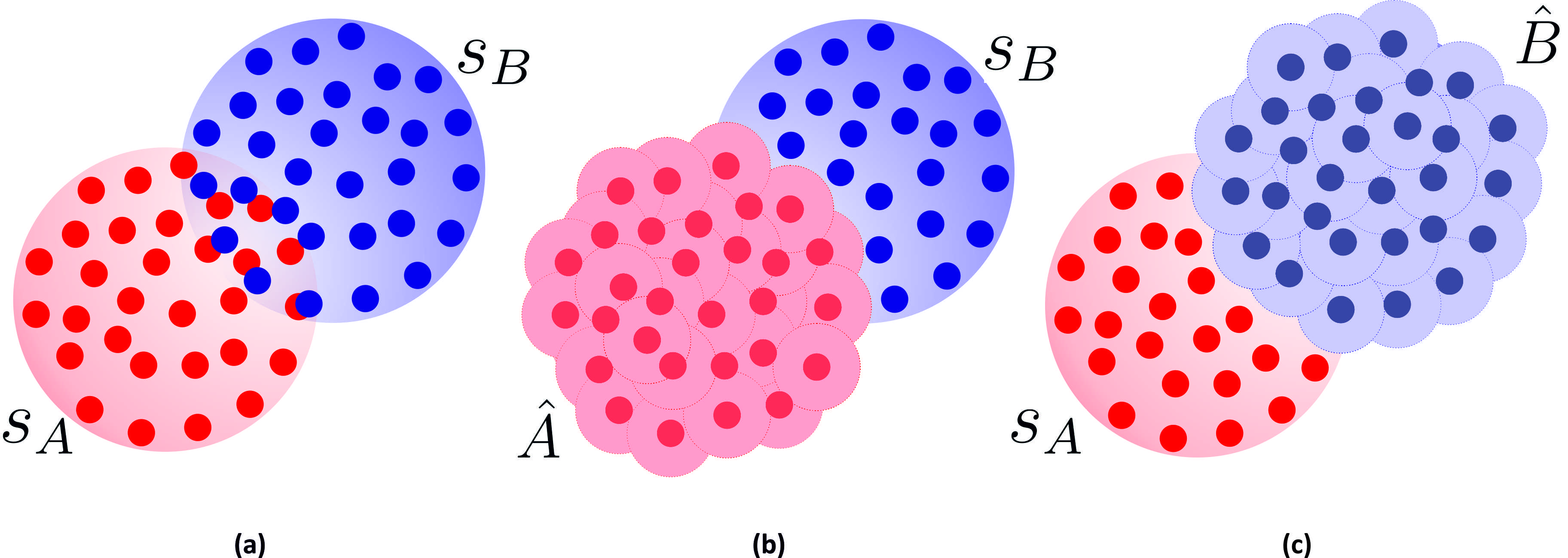}
  \end{center}
      \caption{Set estimate of the symmetric difference. (a) Data samples $s_A$ (red) and $s_B$ (blue). (b) $s_B$ - Covering $\hat A$: blue points. (c) $s_A$ - Covering $\hat B$: red points. Blue points in (b) plus red points in (c) are the estimate of $A\bigtriangleup  B$.}\label{fig:difsimetrica}
\end{figure*}

To illustrate this notation we include Figure \ref{fig:difsimetrica}. In Figure \ref{fig:difsimetrica} (a) we show two data different samples from $\alpha$-level sets $A$ and $B$: $s_{A}$ (red points) and $s_{B}$ (blue points), respectively. In Figure \ref{fig:difsimetrica}(b) $\hat A$ is the covering estimation of set $A$ made up of the union of balls centered in the data red points $s_{A}$. 
That is $\hat{A} = \cup_j^n B(x_j,r_A) \xrightarrow[n\to\infty]{r_A\to0} A$.  Figure \ref{fig:difsimetrica} (c) can be interpreted equivalently regarding the covering of the sample $s_{B}$. The problem of calculating $\widehat{\mu(A\bigtriangleup B)}$ thus reduces to estimate the number points in $\hat B$ not belonging to the covering estimate of $A$, plus the number points in $\hat A$ not belonging to the covering estimate of $B$. To make the computation explicit consider $x\in A, y\in B$ and define 
\begin{eqnarray}\nonumber
I_{r_A,r_B}(x,y) &=& \indicator{B(x,r_A)}(y) + \indicator{B(y,r_B)}(x)\\ \nonumber
&-& \indicator{B(x,r_A)}(y) \indicator{B(y,r_B)}(x),
\end{eqnarray}
where $I_{r_A,r_B}(x,y)=1$ when $y$ belongs to the covering $\hat{A}$, 
$x$ belongs to the covering $\hat{B}$ or both events happen. Thus if we 
define
$$I(A,B) = \sum_{x\in A}\sum_{y\in B} I_{r_A,r_B}(x,y),$$
we are able to estimate the symmetric difference by
$$ 
\widehat{\mu(A\bigtriangleup B)} = \widehat{\mu(A\cup B)} -\widehat{\mu(A\cap B)}=  \#{\mu(A\cup B)} - I(A,B).$$

\subsection{Estimation of level sets}
\label{sec:oneclass}

To estimate the set denoted as  $s_{\hat A_i(\mathbb{P})}$ we implement a One-Class Neighbor Machine approach \citep{bib:munoz2006,bib:naiveoneclass}. The  One-Class Neighbor Machine solves the following optimization problem:
\begin{equation}\label{eq:snmprimal1}
\begin{array}{lll}
\displaystyle \max_{\rho,\xi} & \displaystyle   \nu n \rho - \sum_{i=1}^n \xi_i &\\
\mbox{s.t.} & g(x_i)\ge \rho-\xi_i \, ,& \\
& \xi_i \ge 0 ,& i=1,\ldots,n \, ,
\end{array}
\end{equation}

where $g(x) = M(x,s_n)$ is a sparsity measure (see the Appendix C for further details), $\nu \in [0,1]$ such that $P(S_\alpha)=1-\nu$, $\xi_i$ with $i=1,\dots,n$ are slack variables and $\rho$ is a predefined constant. With the aid of the Support Neighbor Machine, we estimate a density contour cluster $S_{\alpha_i}(f)$ around the mode for a suitable sequence of values $\{\nu_i\}_{i=1}^m$ (note that the sequence $0\geq \nu_1, \dots , \nu_m = 1$  it is in a one-to-one correspondence with the sequence $0\leq \alpha_1 < \ldots < \alpha_m=\max_{x\in X} f_\mathbb{P}(x)$). In Table 1 we present the algorithm to estimate $S_\alpha(f)$ of a density function $f$. Hence, we take $ s_{\hat A_i(\mathbb{P})} = \hat S_{\alpha_i}(f_\mathbb{P}) - \hat S_{\alpha_{i+1}}(f_\mathbb{P})$, where $\hat S_{\alpha_i}(f_\mathbb{P})$ is estimated by $R_n$ defined in Table 1 (the same estimation procedure applies for $s_{\hat A_i(\mathbb{Q})}$).

\begin{table*}
\begin{tabular}{ l p{16cm} }
 & \textbf{Estimation of} $\mathbf{R_n = \hat S_{\alpha}(f)}$:\\
\hline
\textbf{1} & Choose a constant $\nu \in [0,1]$. \\
\textbf{2} & Consider the order induced in the sample $s_n$ by the sparsity measure $g_n(x)$,
             that is, $g_n(x_{(1)}) \le \cdots \le g_n(x_{(n)})$, where $x_{(i)}$ denotes the $i^{th}$ sample, ordered                 
             after $g$.\\
\textbf{3} & Consider the value $\rho_n^* = g(x_{(\nu n)})$ if $\nu n \in \mathbb{N}$,
             $\rho_n^* = g_n(x_{([\nu n]+1)})$ otherwise,
             where $[x]$ stands for the largest integer not greater than $x$.\\
\textbf{4} & Define $h_n(x) = sign(\rho_n^* - g_n(x))$.\\
\hline
\end{tabular}
\label{tab:algorit}
\caption{Algorithm to estimate minimum volume sets ($S_{\alpha}(f)$) of a density $f$.}
\end{table*}

The computational complexity of the algorithm of Table 1 and more details on the estimation of the regions $\mathbf{\hat S_{\alpha}(f)}$ are contained in \citep{munoz2004one,bib:naiveoneclass,bib:munoz2006}. The execution time required to compute $\mathbf{\hat S_{\alpha}(f)}$ grows at a rate of order $\mathcal{O}(dn^2)$, where $d$ represent the dimension and $n$ the sample size of the data at hand. 

\subsection{Choice of weights for $\alpha$-level set distances}\label{sec:weightch}

In this section we define a weighting scheme for the family of distances defined by Eq. \eqref{eq:alpha-distance}. Denote by $s_{\mathbb{P}}$ and 
$s_{\mathbb{Q}}$ the data samples corresponding to PMs $\mathbb{P}$ and $\mathbb{Q}$ respectively, and denote by $s_{\hat A_i(\mathbb{P})}$ and $s_{\hat A_i(\mathbb{Q})}$ the data samples that estimate $A_i(\mathbb{P})$ and $A_i(\mathbb{Q})$,
respectively. Remember that we can estimate these sets by coverings
$\hat A_i(\mathbb{P}) = \cup_{x\in s_{\hat A_i(\mathbb{P})}} B(x,r_{\hat A_i(\mathbb{P})})$,
$\hat A_i(\mathbb{Q}) = \cup_{x\in s_{\hat A_i(\mathbb{Q})}} B(x,r_{\hat A_i(\mathbb{Q})})$.

 Let $m$ denote the size of the $\boldsymbol \alpha_\mathbb{P}^m$ and $\boldsymbol \beta_\mathbb{Q}^m$ sequences. Denote by
$n_{\hat A_i(\mathbb{P})}$ the number of data points in $s_{\hat A_i(\mathbb{P})}$, $n_{\hat A_i(\mathbb{Q})}$ the number of data points in $s_{\hat A_i(\mathbb{Q})}$,
$r_{\hat A_i(\mathbb{P})}$ the (fixed) radius for the covering  $\hat A_i(\mathbb{P})$ and $r_{\hat A_i(\mathbb{Q})}$ the (fixed) radius for the covering  $\hat A_i(\mathbb{Q})$, usually the mean or the median distance inside the region $\hat A_i(\mathbb{P})$ and $\hat A_i(\mathbb{Q})$ respectively. We define the following weighting scheme:

\begin{eqnarray}
\nonumber
w_i & = &\frac{1}{m}\sum\limits_{x\in s_{\hat A_i(\mathbb{P})}}^{n_{\hat A_i(\mathbb{P})}}\sum\limits_{y\in
s_{\hat A_i(\mathbb{Q})}}^{n_{\hat A_i(\mathbb{Q})}} \left( 1- I_{r_{\hat A_i(\mathbb{P})},r_{\hat A_i(\mathbb{Q})}}(x,y)\right) \cdot \\
&  & \frac{\parallel x - y\parallel_2}{ (s_{\hat A_i(\mathbb{Q})} - \hat A_i(\mathbb{P})) \cup (s_{\hat A_i(\mathbb{P})}- \hat A_i(\mathbb{Q}))}.
\end{eqnarray}

The weight $w_i$ is a weighted average of distances between a point of $s_{\hat A_i(\mathbb{P})}$ and a point of $s_{\hat A_i(\mathbb{Q})}$ where $\|x-y\|_2$ is taken into account only when $I_{r_{\hat A_i(\mathbb{P})},r_{\hat A_i(\mathbb{Q})}}(x,y)=0$. More details about the weighting scheme and its extension can be seen in the Appendix D.

\section{Experimental work}\label{sec:exp}
Since the proposed distance is intrinsically nonparametric, there are no simple parameters on which we can concentrate our attention to do exhaustive benchmarking. The strategy will be to compare the proposed distance to other classical PM distances for some well known (and parametrized) distributions and for real data problems. Here we consider distances belonging to the main types of PMs metrics: Kullback-Leibler (KL) divergence \citep{bib:boltz,bib:nguy1} ($f$-divergence and also Bregman divergence), t-test (T) measure (Hotelling test in the multivariate case), Maximum Mean Discrepancy (MMD) distance \citep{gretton2012kernel,Sriperumb_HSEmbsd} and Energy distance \citep{bib:rizzo,bib:Sejdin} (an Integral Probability Metric, as it is demonstrated in \citep{bib:Sejdin}).

\subsection{Artificial data}
\subsubsection{Discrimination between normal distributions}
In this experiment we quantify the ability of the considered PM distances to test the null hypothesis $H_0:\mathbb{P} = \mathbb{Q}$ when $\mathbb{P}$ and $\mathbb{Q}$ are multivariate normal distributions. To this end, we generate a data sample of size $100  d$ from a normal distribution $N(\mathbf{0},\mathbf{I}_d)=\mathbb{P}$, where $d$ stands for dimension and then we generate 1000 \textit{iid} data samples of size $100  d$ from the same $N(\mathbf{0},\mathbf{I}_d)$ distribution. Next we calculate the distances between each of these $1000$ \textit{iid} data samples and the first data sample to obtain the $95\%$ distance percentile denoted as $d_{H_0}^{\;95\%}$. 

Now define $\boldsymbol{\delta} =\delta \mathbf{1}= \delta (1,\dots,1) \in \mathbb{R}^d$ and increase $\delta$ by small amounts (starting from 0). For each $\boldsymbol{\delta}$ we generate a data sample of size $100 d$ from a $N(\mathbf{0}+\boldsymbol{\delta},\mathbf{I}_d)=\mathbb{Q}$ distribution. If $d(\mathbb{P},\mathbb{Q})>d_{\mathbb{P}}^{\;95\%}$ we conclude that the present distance is able to discriminate between both populations (we reject $H_0$) and this is the value $\delta^*$ referenced in Table \ref{tab:table1}. To track the power of the test, we repeat this process 1000 times and fix $\delta^*$ to the present $\delta$ value if the distance is above the percentile in $90\%$ of the cases.
Thus we are calculating the minimal value $\mathbf{\delta^*}$ required for each metric in order to discriminate between populations with a $95\%$ confidence level (type I error $= 5\%$) and a $90\%$ sensitivity level (type II error $= 10\%$). In Table \ref{tab:table1} we report the minimum distance ($\delta^* \sqrt{d})$ between distributions centers required to discriminate for each metric in several alternative dimensions, where small values implies better results. In the particular case of the $T$-distance for normal distributions we can use the Hotelling test to compute a $p$-value to fix the $\delta^*$ value.

\begin{table*}[t!]
\begin{center}
\caption{$\mathbf{\delta^*}\sqrt{d}$ for a 5\% type I and 10\% type II errors.} 
\begin{tabular}{|l|ccccccccccc|}  
\hline                            
Metric&d:	&1 &2 &3 &4 &5 &10 &15 &20 &50 &100 \\[0.5ex]
\hline
KL &    & $0.870$ & $0.636$ & $0.433$ & $0.430$ &  
          $0.402$ & $0.474$ & $0.542$ & $0.536$ & $0.495$ & $0.470$\\ 
T  &    & $0.490$ & $0.297$ & $0.286$ & $0.256$& 
          $0.246$ & $0.231$ & $0.201$ & $0.182$& 
          $0.153$ & $0.110$ \\
Energy && $0.460$ & $0.287$ & $0.284$ & $0.256$&
          $0.250$ & $0.234$ & $0.203$ & $0.183$& 
          $0.158$ & $0.121$\\
MMD    && $0.980$ & $0.850$ & $0.650$ & $0.630$&
          $0.590$ & $0.500$ & $0.250$ & $0.210$& 
          $0.170$ & $0.130$\\
LS(0) & & $0.490$ & $0.298$ & $0.289$ & $0.252$& 
          $0.241$ & $0.237$ & $0.220$ & $0.215$& 
          $0.179$ & $0.131$ \\
LS(1)  && $\mathbf{0.455}$ & $\mathbf{0.283}$ & $\mathbf{0.268}$ & $\mathbf{0.240}$&
          $\mathbf{0.224}$ & $\mathbf{0.221}$ & $\mathbf{0.174}$ & $\mathbf{0.178}$& 
          $\mathbf{0.134}$ & $\mathbf{0.106}$ \\
\hline
\end{tabular}
\end{center}
\label{tab:table1}
\end{table*}
The data chosen for this experiment are ideal for the use of the $T$ statistics that, in fact, outperforms KL and MMD. However, Energy distance works even better than $T$ distance in dimensions 1 to 4. The LS(0) distance work similarly to $T$ and Energy until dimension $10$. The LS(1) distance outperform to all the competitor metrics in all the considered dimensions.

In a second experiment we consider again normal populations but different variance-covariance matrices. Define as an expansion factor $\sigma \in \mathbb{R}$ and increase $\sigma$ by small amounts (starting from 0) in order to determine the smallest $\sigma^*$ required for each metric in order to discriminate between the $100d$ sampled data points generated for the two distributions: $N(\mathbf{0},\mathbf{I}_d)=\mathbb{P}$ and $N(\mathbf{0},(1+\sigma)\mathbf{I}_d)=\mathbb{Q}$. If $d(\mathbb{P},\mathbb{Q})>d_{\mathbb{P}}^{\;95\%}$ we conclude that the present distance is able to discriminate between both populations and this is the value $(1+\sigma^*)$ reported in Table \ref{tab:table2}. To make the process as independent as possible from randomness we repeat this process 1000 times and fix $\sigma^*$ to the present $\sigma$ value if the distance is above the $90\%$ percentile of the cases, as it was done in the previous experiment.

\begin{table*}[t1]
\begin{center}
\caption{$(1+\sigma^*)$ for a 5\% type I and 10\% type II errors.}  
\begin{tabular}{|l|ccccccccccc|}  
\hline                
Metric&dim:	&1 &2 &3 &4 &5 &10 &15 &20 &50 &100 \\[0.5ex]
\hline
KL &&        $3.000$ & $1.700$ & $1.250$ & $1.180$ & 
             $1.175$ & $1.075$ & $1.055$& $1.045 $ &
             $1.030$ & $1.014$\\ 
T  &&        $-$ & $-$ & $-$ & $-$&              
             $-$ & $-$ & $-$ & $-$&
             $-$ & $-$ \\
Energy &&    $1.900$ & $1.600$ & $1.450$ & $1.320$&
             $1.300$ & $1.160$ & $1.150$ & $1.110$&
             $1.090$ & $1.030$\\
MMD    && $6.000$ & $4.500$ & $3.500$ & $2.900$&
          $2.400$ & $1.800$ & $1.500$ & $1.320$& 
          $1.270$ & $1.150$ \\
LS(0) &&     $1.850$ & $1.450$ & $1.300$ & $1.220$& 
             $1.180$ & $1.118$ & $1.065$ & $1.040$&
             $1.030$ & $1.012$ \\
LS(1) &&     $\mathbf{1.700}$ & $\mathbf{1.350}$ & $\mathbf{1.150}$ & $\mathbf{1.120}$&
             $\mathbf{1.080}$ & $\mathbf{1.050}$ & $\mathbf{1.033}$ & $\mathbf{1.025}$& 
             $\mathbf{1.015}$ & $\mathbf{1.009}$ \\
\hline
\end{tabular}
\end{center}
\label{tab:table2}
\end{table*}
There are no entries in Table \ref{tab:table2} for the T distance because it was not able to distinguish between the considered populations in none of the considered dimensions. The MMD distance do not show a good discrimination power in this experiment. We can see here again that the proposed LS(1) distance is better than the competitors in all the dimensions considered, having the LS(0) and the KL similar performance in the second place among the metrics with best discrimination power.

\subsubsection{Homogeneity tests}
This experiment concerns a homogeneity test between two populations: a mixture between a Normal and a Uniform distribution ($\mathbb{P}=\alpha N(\mu=1,\sigma=1) + (1-\alpha)U(a=1,b=8)$ where $\alpha = 0.7$) and a Gamma distribution ($\mathbb{Q}=\gamma(shape = 1, scale=2)$). To test the null hypothesis: $H_0: \mathbb{P}=\mathbb{Q}$ we generate two random i.i.d. samples of size $100$ from $\mathbb{P}$ and $\mathbb{Q}$, respectively. Figure \ref{fig:mixtasimet} shows the corresponding density functions for $\mathbb{P}$ and $\mathbb{Q}$.
\begin{figure}[t!]
  \begin{center}
    \includegraphics[scale=.45]{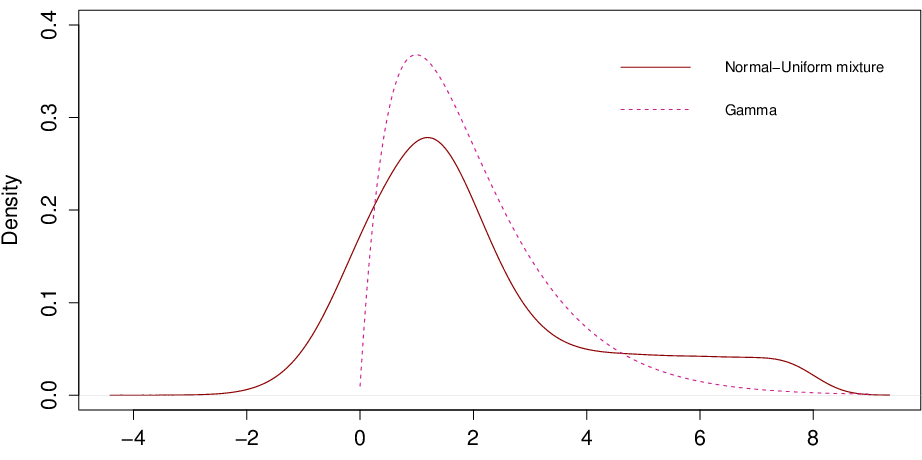}
  \end{center}
      \caption{Mixture of a Normal and a Uniform Distribution and a Gamma distribution.} \label{fig:mixtasimet}
\end{figure}
In the cases of KL-divergence, T, Energy MMD and LS distances we proceed as in the previous experiment, we run a permutation test based on $1000$ random permutation of the original data in order to compute the $p$-value. In the case of Kolmogorov-Smirnov, $\chi^2$ and Wilcoxon  test we report the $p$-value given by these tests. Results are displayed in Table \ref{tab:table3}:  Only the LS distances are able to distinguish between both distributions. Notice that first and second order moments for both distribution are quite similar in this case ($\mu_{\mathbb{P}}=2.05 \simeq \mu_{\mathbb{Q}}=2$ and $\sigma_{\mathbb{P}}=4.5 \simeq \sigma_{\mathbb{Q}}=4$) and additionally both distributions are strongly asymmetric, which also contributes to explain the failure of those metrics strongly based on the use of the first order moments.
\begin{table}[h]
\caption{Hypothesis test between a mixture of Normal and Uniform distributions and a Gamma distribution.}
\centering  
\begin{tabular}{|l|ccc|}  
\hline
Metric&Parameters &$p$-value &Reject? \\[0.5ex]
\hline
Kolmogorov-Smirnov    & &   $0.281$     &No.\\ 
$\chi^2$ test         & &   $0.993$ &No.\\
Wilcoxon test         & &  $0.992$      &No.\\
KL 		&$k = 10$& $0.248$                        &No.\\
T  		&  &  $0.342$               &No.\\
Energy  &  & $0.259$                             &No. \\
MMD &   & $ 0.177$                             &  No.\\
LS (0)  &$m=10$& $\mathbf{0.092}$                  &\textbf{Yes.}\\
LS (1)  &$m=10$& $\mathbf{0.050}$                  &\textbf{Yes.}\\
\hline
\end{tabular}
\label{tab:table3}
\end{table}

\subsection{Two real case-studies}

\subsubsection{Shape classification}
As an application of the preceding theory to the field of pattern recognition problem we consider the MPEG7 CE-Shape-1 \citep{bib:mpeg7}, a well known shape database. We select four different classes of objects/shapes from the database: hearts, coups, hammers and bones. For each object class we choose $3$ images in the following way: $2$ standard images plus an extra image that exhibit some distortion or rotation (12 images in total). In order to represent each shape we do not follow the usual approach in pattern recognition that consists in representing each image by a feature vector catching its relevant shape aspects; instead we will look at the image as a cloud of points in $\mathbb{R}^2$, according to the following procedure: Each image is transformed to a binary image where each pixel assumes the value $1$ (white points region) or $0$ (black points region) as in Figure \ref{fig:procesoobjeto} (a). For each image $i$ of size $N_i \times M_i$ we generate a uniform sample of size $N_i M_i$ allocated in each position of the shape image $i$. To obtain the cloud of points as in Figure \ref{fig:procesoobjeto} (b) we retain only those points which fall into the white region (image body) whose intensity gray level are larger than a variable threshold fixed at $0.99$ so as to yield around one thousand and two thousand points image representation depending on the image as can be seen in Figure \ref{fig:procesoobjeto} (b). 
\begin{figure}[t!]
  \begin{center}
    \includegraphics[height=5cm,width=8cm]{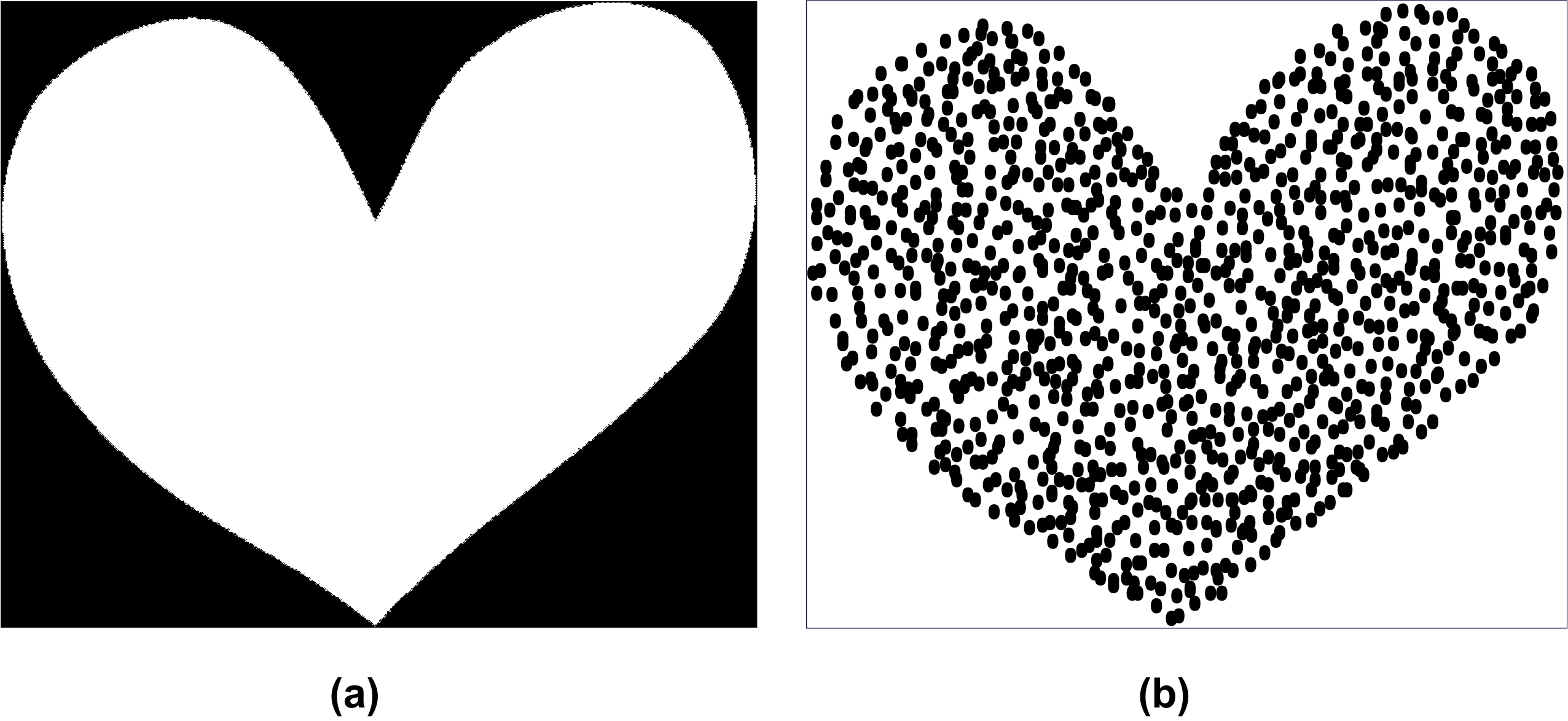}
  \end{center}
      \caption{Real image (a) and sampled image (b) of a hart in the MPEG7 CE-Shape-1 database.} \label{fig:procesoobjeto}
\end{figure}
\noindent
After rescaling and centering, we compute the $12 \times 12$ image distance matrices, using the LS(1) distance and the KL divergence, and then compute Euclidean coordinates for the images via MDS  (results in Figure \ref{fig:objectsclustering}). It is apparent that the LS distance
produces a MDS map coherent with human image perception (fig. \ref{fig:procesoobjeto} (a)). This does not happen for the rest of tested metrics, in particular for the KL divergence as it is shown in Figure \ref{fig:procesoobjeto} (b)). 
\begin{figure*}[t!]
  \begin{center}
    \includegraphics[height=6cm,width=14cm]{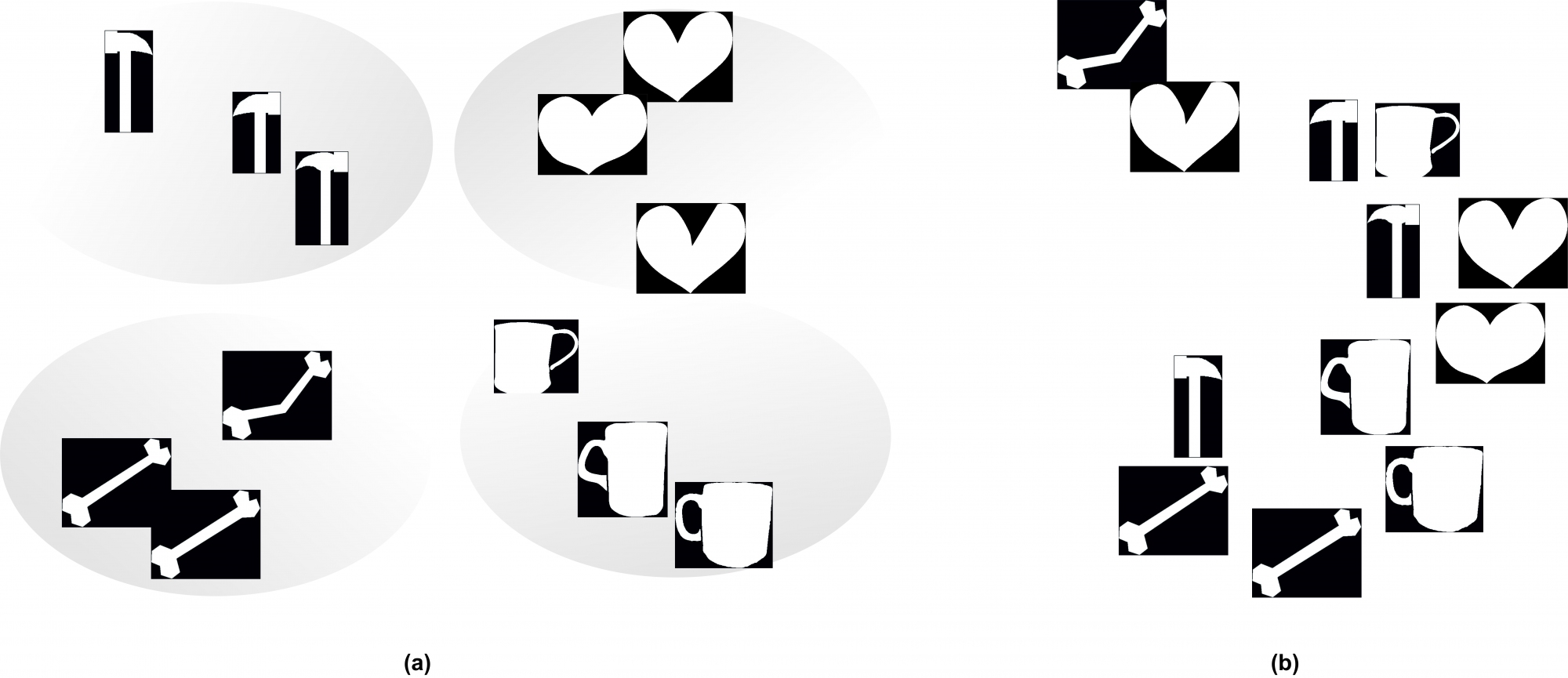}
  \end{center}
      \caption{Multi Dimensional Scaling representation for objects based on (a) LS(1) and (b) KL divergence.} \label{fig:objectsclustering}
\end{figure*}
\subsubsection{Testing statistical significance in Microarray experients} \label{sec:micro}

\begin{figure}[t!]
\begin{center}
\includegraphics[scale=0.9]{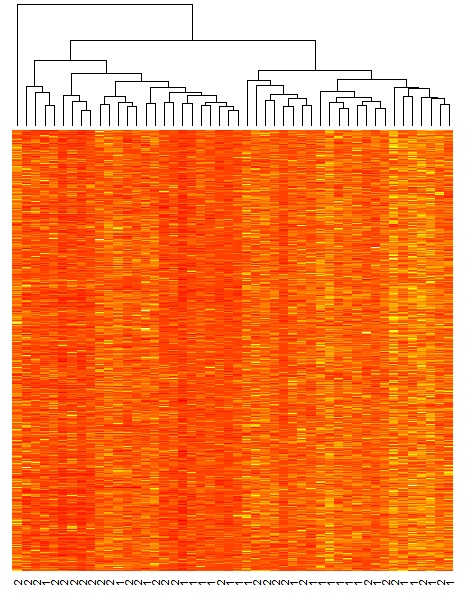}
\caption{Affymetrix U133+2 micro-arrays data from the post trauma recovery experiment. On top, a hierarchical cluster of the patients using the Euclidean distance is included. At the bottom of the plot the grouping of the patients is shown: 1 for  ``early recovery" patients  and 2 for ``late recovery" patients.} \label{fig:microarray}
\end{center}
\end{figure}
Here we present an application of the proposed LS distance in the field of Bioinformatics. The data set we analyze comes from an experiment in which the time to respiratory recovery in ventilated post trauma patients is studied.  Affymetrix U133+2 micro-arrays were prepared at days 0, 1, 4, 7, 14, 21 and 28. In this analysis, we focus on a subset of 46 patients  which were originally divided into two groups: ``early recovery patients'' (group $G_1$) that recovered  ventilation prior to day seven and ``late recovery patients " (group $G_2$), those who recovered ventilation after day seven. The size of the groups is 22 and 26 respectively.

It is of clinical interest to find differences between the two groups of patients. In particular, the originally goal of this study was to test the association of inflammation on day one and
subsequent respiratory recovery. In this experiment we will show how the proposed distance can be used in this context to test statistical differences between the groups and also to identify the genes with the largest effect in the post trauma recovery.

From the original data set \footnote{http://www.ncbi.nlm.nih.gov/geo/query/acc.cgi?acc=GSE13488}
we select the sample of 675  probe sets corresponding to those genes whose GO annotation include the term``inflammatory". To do so we use a query (July 2012) on the Affymetrix web site (affymetrix.com). The idea of this search is to obtain a pre-selection of the genes involved in post trauma recovery in order to avoid working with the whole human genome.

Figure \ref{fig:microarray} shows the heat map of day one gene expression for the 46 patients (columns) over the 675 probe-sets. By using a hierarchical procedure, it is apparent that the two main clusters we find do not correspond to the two groups of patients of the experiment. However, the first cluster (on the left hand of the plot) contains mainly patients form the ``early recovery" group (approx. 65 \%) whereas the second cluster (on the right) is mainly made up of patients from the ``late recovery'' group (approx 60\%). This lack of balance suggests a different pattern of gene expression between the two groups of patients.

\begin{figure*}[t!]
\begin{center}
\includegraphics[scale=0.4]{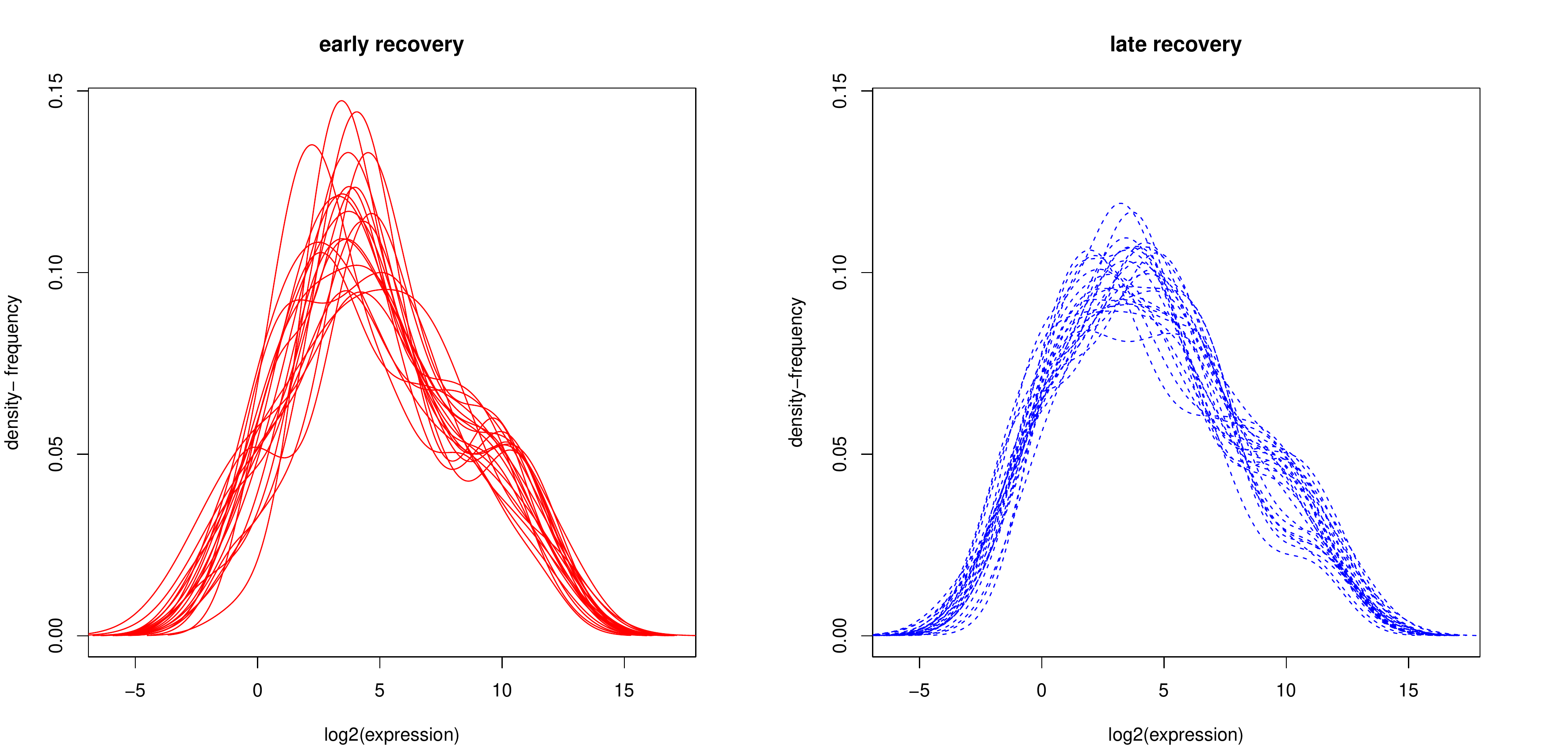}
\caption{Gene density profiles (in logarithmic scale) of the two groups of patients in the sample. The 50 most significant genes were used to calculate the profiles with a kernel density estimator.}\label{fig:profiles}
\end{center}
\end{figure*}

In order to test if statistical differences exists between the groups $G_1$ and $G_2$ we define, inspired by \citep{bib:Hyeden}, an statistical test based on the LS distance proposed in this work. To this end, we identify each patient $i$ with a probability distribution $\mathbb{P}_i$. The expression of the 675 genes across the probe-sets are assumed to be samples of such distributions. Ideally, if the genes expression does not have any effect on the recovery speed then all distributions $\mathbb{P}_i$ should be equal ($H_0$). On the other hand, assume that expression of a gene or a group of genes effectively change between ``early" and ``late" recovery patients. Then, the distributions $\mathbb{P}_i$ will be different between patients belonging to groups $G_1$ and $G_2$ ($H_1$).

To validate or reject the previous hypothesis, consider the proposed LS distance $\hat{d}_{\alpha}(\mathbb{P}_i,\mathbb{P}_j)$  defined in (\ref{eq:alpha-distance-plugin}) for two patients $i$ and $j$. Denote by 
\begin{equation}\nonumber
\Delta_1= \frac{1}{22(22-1)} \sum_{i,j \in G_1} \hat{d}_{\alpha}(\mathbb{P}_i,\mathbb{P}_j),
\end{equation}
\begin{equation}\nonumber
\Delta_2= \frac{1}{26(26-1)} \sum_{i,j\in G_2} \hat{d}_{\alpha}(\mathbb{P}_i,\mathbb{P}_j)
\end{equation}
and
\begin{equation}\nonumber
\Delta_{12}= \frac{1}{22 \cdot 26} \sum_{i\in G1,j\in G_2} \hat{d}_{\alpha}(\mathbb{P}_i,\mathbb{P}_j),
\end{equation}   
the averaged  $\alpha$-level set distances within and between the groups of patients. Using the previous quantities we define a distance between the groups $G_1$ and $G_2$ as
\begin{equation}
\Delta^* = \Delta_{12} - \frac{\Delta_1+ \Delta_2}{2}.
\end{equation}

Notice that if the distributions are equal between the groups then $\Delta^*$ will be close to zero. On the other hand, if the distributions are similar within the groups and different between them, then $\Delta^*$ will be large. To test if $\Delta^*$ is large enough to consider it statistically significant we need the distribution of $\Delta^*$ under the null hypothesis. Unfortunately, this distribution is unknown and some re-sampling technique must be used. In this work we approximate it by calculating a sequence of distances $\Delta^{*} _{(1)},\dots,\Delta^*_{(N)}$ where each $\Delta^*_{(k)}$ is the distance between the groups $G_1$ and $G_2$ under a random permutation of the patients. For a total of $N$ permutations, then 
\begin{equation}
p-value = \frac{\#[\Delta^*_{(k)}\geq \Delta^* :k=1,\dots,N]}{N},
\end{equation}
where $\#[\Theta]$ refers to the number of times the condition $\Theta$ is satisfied, is a one-side p-value of the test. 

\begin{figure*}[t!]
 \begin{center}
  \mbox{
        \subfigure[Heat-map of the top-50 ranked genes (rows). A hierarchical cluster of the patients is included on top. The labels of the patients regarding their recovery group are detailed at the bottom of the plot.]{\includegraphics[height=7.5cm,width=7.5cm]{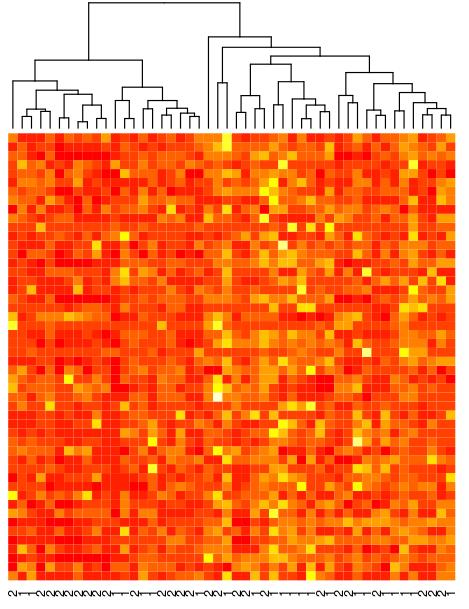}}\quad
   \subfigure[P-values obtained by the proposed  $\alpha$-level set distance based test for different samples of increasing number of genes.]{\includegraphics[height=7.5cm,width=7.5cm]{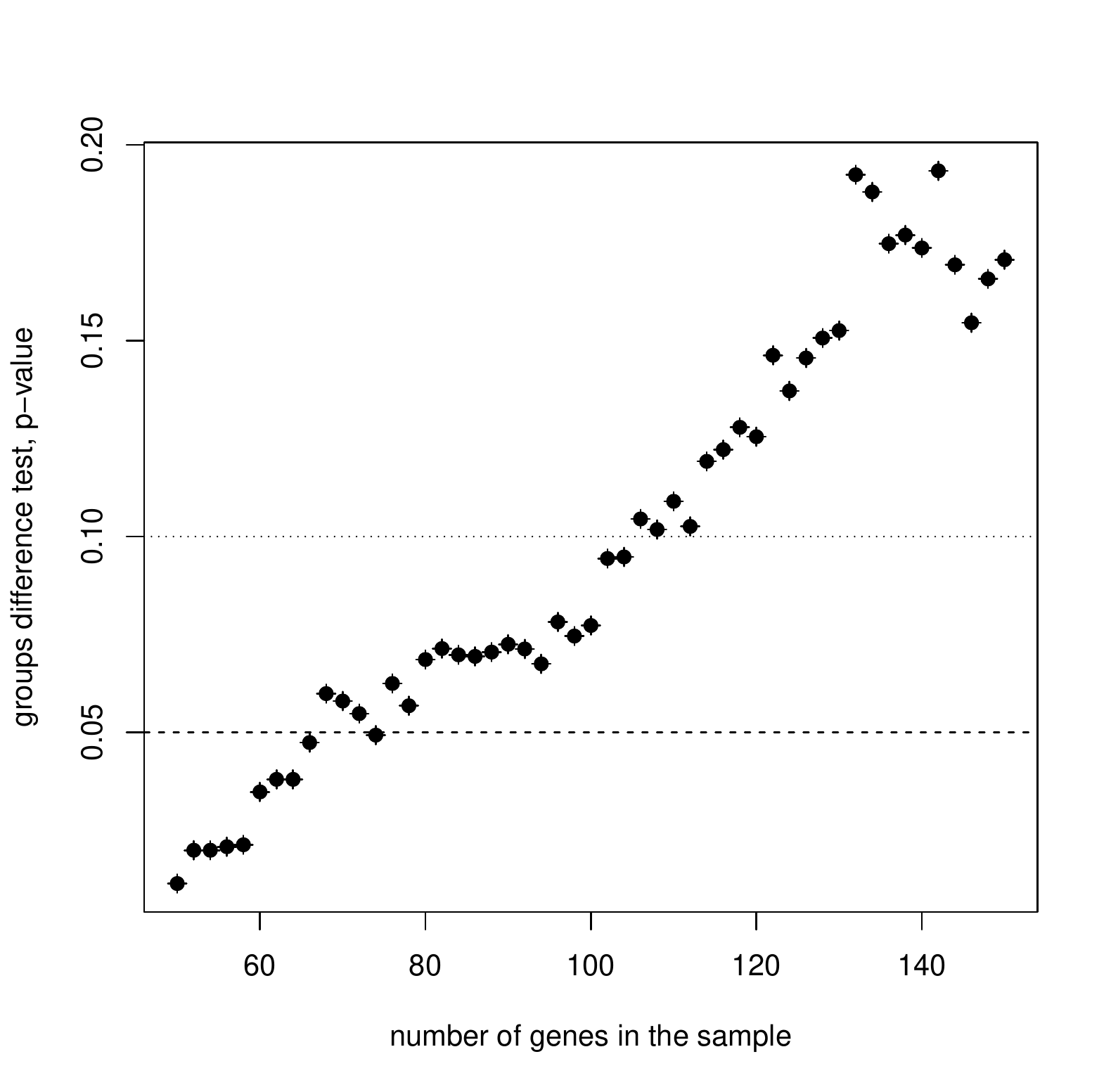}}\quad
       }
\end{center}
\caption{Heat-map of the 50-top ranked genes and p-values for different samples.}\label{figure:heat.and.pvalues}
\end{figure*}

We apply the previous LS distance based test (weighting scheme 1 with 10000 permutations) using the values of the 675 probe-sets and we obtain a p-value $= 0.1893$. This result suggests that none differences exists between the groups exist. The test for micro-arrays proposed in \citep{bib:Hyeden} also confirms this result with a p-value of $0.2016$. The reason to explain this -a priori- unexpected result is that, if differences between the groups exist, they are probably hidden by a main group of genes with similar behavior between the groups. To validate this hypothesis, we first rank the set of 675 genes in terms of their individual variation between groups $G_1$ and $G_2$. To do so, we use the p-values of individual difference mean T-tests. Then, we consider the top-50 ranked genes and we apply the  $\alpha$-level set distance test. The obtained p-value is $0.010$, indicating a significant difference in gene expression of the top-50 ranked genes.  In Figure \ref{fig:profiles} we show the estimated density profiles of the patients using a kernel estimator. It is apparent that the profiles between groups are different as it is reflected in the obtained results. In Figure \ref{figure:heat.and.pvalues}, we show the heat-map calculated using the selection of 50 genes. Note  that a hierarchical cluster using the Euclidean distance, which is the most used technique to study the existence of groups in micro-array data, is not able to accurately reflect the existence of the two groups even for the most influential genes.

To conclude the analysis we go further from the initial 50-genes analysis. We aim to obtain the whole set of genes in the original sample for which differences between groups remain significant. To do so, we sequentially include in the first 50-genes sample the next-highest ranked genes and we apply to the augmented data sets the LS distance based test. The p-values of such analysis are shown in Figure \ref{figure:heat.and.pvalues}. Their value increases as soon as more genes are included in the sample. With a type-I error of 5\%, statistical differences are found for the 75 first genes. For a 10\% type-I error, with the first 110 genes we still are able to find differences between groups. This result shows that differences between ``early'' and ``late'' recovery trauma patients exist and they are caused by the top-110 ranked genes of the Affymetrix U133+2 micro-arrays (filtered by the query ``inflammatory"). By considering each patient as a probability distribution the LS distance has been used to test differences between groups and to identify the most influential genes of the sample. This shows the ability of the new proposed distance to provide new insights in the analysis of biological data.


\section{Conclusions}
\label{sec:concl}

In this paper we presented probability measures as generalized functions, acting on appropriate function spaces. In this way we were able to introduce a new family of distances for probability measures, based on the evaluation of the PM functionals  on a finite number of well chosen functions of the sample. The calculation of these PM distances does not require the use of either parametric assumptions or explicit probability estimations which makes a clear advantage over most well established PM distances, such as Bregman divergences, which makes a clear advantage over most well established PM distances and divergences.

A battery of artificial and real data experiments have been used to study the performance of the new family of distances. Using synthetically generated data, we have shown their performance in the task of discriminating normally distributed data. Although the generated data sets are ideal for the use of T-statistics the new distances shows superior discrimination results than classical methods. Similar conclusions have been obtained when the proposed distances are used in homogeneity test scenarios. Regarding the practical applications, the new PM distances have been proven to be competitive in shape recognition problems. Also they represent a novel way to identify genes and discriminate between groups of patients in micro-arrays. 

In the near future we will afford the study of the geometry induced by the proposed measure and its asymptotic properties. It is also interesting to investigate in the relationship there may be exist between the proposed distance and other probability metrics.\\

\noindent
\textbf{Acknowledgements}
This work was partially supported by projects MIC 2012/00084/00, ECO2012-38442, SEJ2007-64500, MTM2012-36163-C06-06, DGUCM 2008/00058/002 and MEC 2007/04438/001.

\section*{Appendix}
\section*{A) Proof for Theorems of Section 3}

\begin{proof}[\textbf{of Theorem 1}] Consider $x \in \operatorname{Supp}(\mathbb{P})$; given $m$ and a sequence $\boldsymbol \alpha_\mathbb{P}^m$,  
$x \in A_i(\mathbb{P}) = S_{\alpha_i}(f_\mathbb{P}) - S_{\alpha_{i+1}}(f_\mathbb{P})$ for one (and only one) $i \in \{1,\dots,m-1\}$, that is $\alpha_i\leq f_\mathbb{P}(x) \leq \alpha_{i+1}$. Then $\phi_{i,\mathbb{P}}(x)=\indicator{A_i(\mathbb{P})}(x)=1$ in the region $A_i(\mathbb{P})$ and zero elsewhere. Given $\varepsilon > 0$, choose $m>\frac{1}{\varepsilon}$ and $\alpha_{i+1}= \alpha_i + \frac{1}{m}$. Given that $\alpha_i\leq f_\mathbb{P}(x) \leq \alpha_{i+1}$, then $|\alpha_{i} - f_\mathbb{P}(x)|\leq \frac{1}{m}$, and thus:

\begin{eqnarray}
|f_{m-1}(x) - f_\mathbb{P}(x) |  &= & \left|\sum_{j=1}^{m-1}\alpha_{j}\phi_{j,\mathbb{P}}(x) - f_\mathbb{P}(x)\right|\\ \nonumber
&= & |\alpha_i - f_\mathbb{P}(x)| \leq \frac{1}{m}< \varepsilon .
\end{eqnarray}
That is $\lim_{m\rightarrow \infty} f_{m-1}(x)= f_\mathbb{P}(x)$ pointwise  and also uniformly by Dini's Theorem. Therefore we can approximate (by fixing $m\gg 0$) the density function as a simple function, made up of linear combination of indicator functions weighted by coefficients that represents the density value of the $\alpha$-level sets of the density at hand.
\end{proof}

\begin{proof}[\textbf{of Corollary 1}]
By Theorem \ref{th1} we can approximate (by fixing $m \gg 0$) the density function as: 
$$f_\mathbb{P}(x) \approx \sum_{j=1}^{m-1}\alpha_{j}\phi_{j}(x),$$ 
where $\alpha_j= \langle \phi_{j},f_\mathbb{P}\rangle$ and $\phi_{j}$ is the indicator function of the $\alpha_j$-level set of $\mathbb{P}$. Then if $\langle \phi_{j},f_\mathbb{P}\rangle \xrightarrow[m \rightarrow \infty]{} \langle \phi_{j},f_\mathbb{Q}\rangle $, for all the indicator functions $\phi_{j}$, then $f_\mathbb{P} = f_\mathbb{Q}$.
\end{proof}

\begin{proof}[\textbf{of Proposition 1}]
If $\lim\limits_{m \rightarrow \infty} d_{{\boldsymbol{\alpha,\beta}}}(\mathbb{P},\mathbb{Q})=0$, then 
$$A_i(\mathbb{P}) \stackrel{m \rightarrow \infty}{=} A_i(\mathbb{Q}) \; \forall i$$. Thus $ f_{m}^{\mathbb{P}}(x) = \sum_{j=1}^{m}\alpha_{j}\phi_{j}(x) = f_{m}^{\mathbb{Q}}(x) \; \forall m$, and by Theorem \ref{th1}: $f_{\mathbb{P}} = f_{\mathbb{Q}}$. In the other way around if $\mathbb{P}=\mathbb{Q}$ ($f_{\mathbb{P}} = f_{\mathbb{Q}}$) then it is certain that $\lim\limits_{m \rightarrow \infty} d_{{\boldsymbol{\alpha,\beta}}}(\mathbb{P},\mathbb{Q})=0$.
\end{proof}

\section*{B) Invariance under affine transformations}
\begin{lemma}
Lebesgue measure is equivalent under affine transformations.
\end{lemma}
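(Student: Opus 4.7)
The plan is to show that for any invertible affine transformation $T:\mathbb{R}^d\rightarrow \mathbb{R}^d$ of the form $T(x)=Ax+b$, one has
\[
\mu(T(E)) \;=\; |\det(A)|\,\mu(E)
\]
for every Borel set $E\subseteq \mathbb{R}^d$. Since $|\det(A)|$ is a positive constant independent of $E$, this establishes that the pushforward measure $T_\ast \mu$ and the original Lebesgue measure $\mu$ share exactly the same null sets, i.e.\ they are equivalent in the measure-theoretic sense, differing only by a global multiplicative factor.

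First I would factor $T$ as a composition of a linear map followed by a translation, $T = \tau_b \circ A$ where $\tau_b(y)=y+b$. Translation invariance of Lebesgue measure, $\mu(E+b)=\mu(E)$, is one of its defining properties (Lebesgue measure is characterized, up to a multiplicative constant, as the unique translation-invariant Borel regular measure on $\mathbb{R}^d$ assigning finite measure to compact sets), so the translation factor contributes nothing to the scaling.

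Next, for the linear part I would establish $\mu(AE)=|\det(A)|\,\mu(E)$. The cleanest route is to apply the change-of-variables formula to the indicator function $\mathbf{1}_E$:
\[
\mu(AE)=\int_{\mathbb{R}^d}\mathbf{1}_{AE}(y)\,dy=\int_{\mathbb{R}^d}\mathbf{1}_{E}(A^{-1}y)\,dy=|\det(A)|\int_{\mathbb{R}^d}\mathbf{1}_{E}(x)\,dx=|\det(A)|\,\mu(E).
\]
An equivalent elementary route decomposes $A$ as a product of elementary matrices (either via Gauss elimination or via a QR decomposition), and verifies the scaling behavior on each factor separately: orthogonal matrices preserve volume (since they map the unit cube to a set congruent to the unit cube up to a rigid motion), diagonal matrices scale volume by the product of the absolute values of their diagonal entries, and shears leave volume unchanged; multiplicativity of the determinant then assembles these contributions into $|\det(A)|$.

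The main obstacle is the justification, without circularity, that $\mu(A([0,1]^d))=|\det(A)|$; once this is in hand for the unit cube, the identity extends to arbitrary Borel sets via countable additivity together with the Carath\'eodory uniqueness of measures agreeing on the semiring of half-open rectangles. Combining the translation step with the linear-scaling identity yields $\mu(T(E))=|\det(A)|\,\mu(E)$, and equivalence follows immediately since $A$ is assumed invertible, so $|\det(A)|>0$.
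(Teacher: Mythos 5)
Your argument is correct, but it takes a genuinely different route from the paper's. You prove the scaling law $\mu(T(E))=|\det(A)|\,\mu(E)$ for Lebesgue measure itself, for an arbitrary invertible affine map $T(x)=Ax+b$, by splitting off the translation (translation invariance) and handling the linear part either by change of variables on $\mathbf{1}_E$ or by an elementary-matrix/QR decomposition; equivalence (mutual absolute continuity) then follows since the Radon--Nikodym factor $|\det(A)|$ is a positive constant, and you rightly flag the potential circularity in the change-of-variables route and how to avoid it via the unit cube plus Carath\'eodory uniqueness. The paper does something different: it restricts to similarity transformations $T(x)=a+bRx$ with $b>0$ and $R$ orthogonal, $\det(R)=1$, and carries out a change-of-variables computation on the \emph{density}, showing that the transformed measure of the transformed set equals the original measure of the original set, i.e.\ $\int_{T(E)}f_{\mathbb{P}^*}\,dx^* = \int_E f_{\mathbb{P}}\,dy$ --- the Jacobian of $T^{-1}$ cancels against the transformation of $f_{\mathbb{P}}$, so the ``$\mu$'' appearing there is really the measure induced by the PM rather than bare Lebesgue measure. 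What each approach buys: the paper's version yields exact equality of the (induced) measures, which is the form invoked verbatim in the subsequent affine-invariance theorem for the symmetric differences of level sets; your version is more general (arbitrary invertible $A$, not just scaled rotations) and cleaner as a statement about Lebesgue measure, and it still suffices for the invariance theorem because the constant $|\det(A)|$ appears in both the numerator $\mu(A_i(\mathbb{P})\bigtriangleup A_i(\mathbb{Q}))$ and the denominator $\mu(A_i(\mathbb{P})\cup A_i(\mathbb{Q}))$ of the Jaccard-type ratio and therefore cancels.
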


\begin{proof}
Let $X$ be a random variable that take values in $\mathbb{R}^d$ distributed according to $\mathbb{P}$, and let $f_\mathbb{P}$ be its density function. Let $T:\mathbb{R}^d \rightarrow \mathbb{R}^d$ be an affine transformation, define the r.v. $X^*=T(X)=a +bRX$, where $a\in \mathbb{R}^d$, $b\in \mathbb{R}^+$ and $R \in \mathbb{R}^{d\times d}$ is an orthogonal matrix with $\det(R)=1$ (therefore $R^{-1}$ exist and $R^{-1}=R^T$). Then $X^*$ is distributed according to $f_\mathbb{P^*}$. Define $E^*=\{x^*| x^*=T(x) \text{ and } x \in E\}$, then: 
\begin{eqnarray}
\mu^*(E^*)&= &\int_{E^*}d\mathbb{P^*}\\ \nonumber
&= &\int_{E^*} f_\mathbb{P^*}(x^*)dx^*\\ \nonumber
&= &\int_{E^*} f_\mathbb{P}(T^{-1}(x^*))\left| \frac{\partial T^{-1}(x^*)}{\partial x^*}\right| dx^*\\ \nonumber
& = & \int_{E^*} f_\mathbb{P}\left(R^{-1}\left(\frac{x^* - a}{b} \right) \right) \frac{R^{-1}}{b} dx^*\\ \nonumber
&= &\int_{E} f_\mathbb{P}(y)dy= \int_{E}d\mathbb{P}= \mu(E). \nonumber
\end{eqnarray}
\end{proof}

\begin{theorem}{\textbf{(Invariance under affine transformation)}}
The metric proposed in Eq. \eqref{eq:alpha-distance} is invariant under affine transformations.
\end{theorem}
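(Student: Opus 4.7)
The plan is to exploit the fact that each summand of $d_{\boldsymbol{\alpha,\beta}}$ is a ratio of Lebesgue measures, and that under an affine transformation $T(x)=a+bRx$ with $R$ orthogonal and $b>0$ the Jacobian $|\det J_T|=b^d$ is constant. Hence numerator and denominator of each term pick up the same factor $b^d$ and it cancels. The actual content is to verify that the level-set regions $A_i$ themselves transform equivariantly under $T$ once one rescales the $\alpha$-sequence appropriately.

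First I would apply the change-of-variables formula for densities. If $\mathbb{P}^{*}=T_{*}\mathbb{P}$ is the pushforward of $\mathbb{P}$ under $T$, then $f_{\mathbb{P}^{*}}(y)=b^{-d}\,f_{\mathbb{P}}(T^{-1}(y))$, and similarly for $\mathbb{Q}^{*}$. Direct substitution gives, for every $\alpha\ge 0$,
\begin{equation}\nonumber
S_\alpha(f_{\mathbb{P}^{*}})=\{y:f_\mathbb{P}(T^{-1}(y))\ge b^{d}\alpha\}=T\bigl(S_{b^{d}\alpha}(f_\mathbb{P})\bigr).
\end{equation}
Consequently, adopting the canonical rescaling $\boldsymbol\alpha^{m}_{\mathbb{P}^{*}}=b^{-d}\boldsymbol\alpha^{m}_{\mathbb{P}}$ (and analogously for $\mathbb{Q}^{*}$), one obtains the equivariance relations $A_i(\mathbb{P}^{*})=T(A_i(\mathbb{P}))$ and $A_i(\mathbb{Q}^{*})=T(A_i(\mathbb{Q}))$ for every $i$.

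Second, because $T$ is a bijection, it commutes with the set-theoretic operations, so
\begin{equation}\nonumber
A_i(\mathbb{P}^{*})\bigtriangleup A_i(\mathbb{Q}^{*})=T\bigl(A_i(\mathbb{P})\bigtriangleup A_i(\mathbb{Q})\bigr),
\end{equation}
and likewise for the union. Applying the Jacobian identity $\mu(T(E))=b^{d}\mu(E)$ (a direct restatement of Lemma 1 for Lebesgue measure), both the numerator and the denominator of each summand in Eq. \eqref{eq:alpha-distance} scale by the same factor $b^{d}$, it cancels, and each term $d(\phi_i(\mathbb{P}^{*}),\phi_i(\mathbb{Q}^{*}))$ coincides with $d(\phi_i(\mathbb{P}),\phi_i(\mathbb{Q}))$. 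Summing over $i$ with the same weights $w_i$ yields $d_{\boldsymbol{\alpha,\beta}}(\mathbb{P}^{*},\mathbb{Q}^{*})=d_{\boldsymbol{\alpha,\beta}}(\mathbb{P},\mathbb{Q})$.

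The subtlety I expect to handle most carefully is the rescaling of the $\alpha$-sequences. The distance depends on the chosen partitions, so invariance must be stated in the natural sense that the partition used on $(\mathbb{P}^{*},\mathbb{Q}^{*})$ is the one induced by $T$ from the partition used on $(\mathbb{P},\mathbb{Q})$; without this convention the identity $A_i(\mathbb{P}^{*})=T(A_i(\mathbb{P}))$ would fail in general. Once the convention is fixed, the result follows in one line from Lemma 1 together with the two set-theoretic identities above. A minor point worth flagging is that the limiting statement in Proposition 1 also respects this rescaling, since $\max f_{\mathbb{P}^{*}}=b^{-d}\max f_{\mathbb{P}}$, so the induced sequence still terminates at the supremum of the density, and the approximation of Theorem \ref{th1} is undisturbed.
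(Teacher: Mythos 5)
Your proof is correct, and it follows the same overall strategy as the paper (transform the level sets by $T$, use set-theoretic equivariance of $\bigtriangleup$ and $\cup$, then compare measures term by term), but it differs in two substantive and welcome ways. First, you make explicit the step the paper leaves tacit: the density of the pushforward is $f_{\mathbb{P}^*}(y)=b^{-d}f_{\mathbb{P}}(T^{-1}(y))$, so the identity $A_i(\mathbb{P}^*)=T(A_i(\mathbb{P}))$ only holds after rescaling the level sequence, $\boldsymbol\alpha^m_{\mathbb{P}^*}=b^{-d}\boldsymbol\alpha^m_{\mathbb{P}}$; the paper simply writes $\mu^*\left(T(A_i(\mathbb{P}))\triangle T(A_i(\mathbb{Q}))\right)=\mu^*(A_i(\mathbb{P}^*)\triangle A_i(\mathbb{Q}^*))$ without this convention, which is exactly the gap you flag. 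Second, your handling of the measure is cleaner: you use the honest Lebesgue scaling $\mu(T(E))=b^{d}\mu(E)$ and let the common factor $b^{d}$ cancel in the Jaccard ratio, whereas the paper's Lemma asserts that the measure is unchanged and proves it by rewriting $\mu^*(E^*)$ as $\int_{E^*}d\mathbb{P}^*$, i.e.\ by switching from the ambient Lebesgue measure to the pushforward probability content (its theorem proof likewise measures the symmetric difference with $d\mathbb{P}^*$ and $d\mathbb{Q}^*$ rather than with $\mu$ as Eq.~\eqref{eq:alpha-distance} prescribes). Your argument buys a statement that is faithful to the actual definition of $d_{\boldsymbol{\alpha,\beta}}$ and makes clear that invariance genuinely relies on the ratio structure (numerator and denominator scaling identically), while the paper's route, read literally, claims a measure invariance that fails for $b\neq 1$ unless one reinterprets $\mu$ as probability content. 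One small caveat to keep in mind: the invariance you prove, like the paper's, holds for fixed abstract weights $w_i$; the data-driven weights of Section 3.2 involve Euclidean distances and would not be invariant under the scaling $b$, so the theorem should be read for the semi-metric with prescribed $w_i$.
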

\begin{proof}
Let $T$ be an affine transformation, we prove that the measure of the symmetric difference of any two $\alpha$-level sets is invariant under affine transformation, that is: $\mu(A_i(\mathbb{P})\triangle A_i(\mathbb{Q}))=\mu^*\left(T(A_i(\mathbb{P}))\triangle T(A_i(\mathbb{Q}))\right)=\mu^*(A_i(\mathbb{P}^*) \triangle A_i(\mathbb{Q}^*))$. By Lemma 1:
\begin{eqnarray} \nonumber
\mu^*(A_i(\mathbb{P}^*) \triangle A_i(\mathbb{Q}^*)) &= &    \int_{A_i(\mathbb{P}^*) - A_i(\mathbb{Q}^*)} d\mathbb{P}^* +   
        \int_{A_i(\mathbb{Q}^*)-A_i(\mathbb{P}^*)} d\mathbb{Q}^*\\ \nonumber
     &= &\int_{ A_i(\mathbb{P}) - A_i(\mathbb{Q})} d\mathbb{P} +   \int_{A_i(\mathbb{Q}) - A_i(\mathbb{P}) } d\mathbb{Q}\\ \nonumber
 & = & \mu(A_i(\mathbb{P})\triangle A_i(\mathbb{Q})). \nonumber
\end{eqnarray}
The same argument can be applied to the denominator in the expression given in Eq. \eqref{eq:alpha-distance}, thus 
$$\frac{w_i}{\mu^*(A_i(\mathbb{P}^*) \cup A_i(\mathbb{Q}^*))}=\frac{w_i}{\mu(A_i(\mathbb{P}) \cup A_i(\mathbb{Q}))}$$ 
for $i=1,\dots,m-1$. Therefore as this is true for all the $\alpha$-level sets, then the distance proposed in Eq. \eqref{eq:alpha-distance} is invariant under affine transformations:
\begin{eqnarray}\nonumber
d_{{\boldsymbol{\alpha,\beta}}}(\mathbb{P^*},\mathbb{Q^*}) &=& \sum_{i=1}^{m-1}w_i d\left(\phi_i(\mathbb{P^*}),\phi_i(\mathbb{Q^*})\right)\\\nonumber
&=&\sum_{i=1}^{m-1}\lambda_i
d\left(\phi_i(\mathbb{P}),\phi_i(\mathbb{Q})\right)\\ \nonumber
& = &d_{{\boldsymbol{\alpha,\beta}}}(\mathbb{P},\mathbb{Q}).
\end{eqnarray}
\end{proof}

 \section*{C) Details about the estimation of level sets}
  \begin{definition}[Neighbourhood Measures] Consider a random variable $X$ with density function $f(x)$ defined on $\bbbr^d$. Let $S_n$ denote the set of random independent identically distributed samples of size $n$ (drawn from $f$).
 The elements of $S_n$ take the form $s_n = (x_1,\cdots,x_n)$, where $x_i \in \bbbr^d$.
 Let $M: \bbbr^d \times S_n \longrightarrow \bbbr$ be a real-valued function defined for all $n \in \bbbn$.
 (a) If $f(x) < f(y)$ implies $\displaystyle\lim_{n\to\infty}P(M(x,s_n) > M(y,s_n)) = 1$, then $M$ is a \textbf{sparsity measure}. (b)
 If $f(x) < f(y)$ implies $\displaystyle\lim_{n\to\infty}P(M(x,s_n) < M(y,s_n)) = 1$, then $M$ is a \textbf{concentration measure}.
 \end{definition}
   
 \begin{example}
 Consider the distance from a point $x \in \bbbr^d$ to its $k^{th}$-nearest neighbour in $s_n$, $x^{(k)}$: $M(x,s_n) = d_k(x,s_n) = d(x,x^{(k)})$:
 it is a sparsity measure.
 \end{example}
 
 \begin{theorem}\label{th3}The set $R_n = \{x : h_n(x) = sign(\rho_n^*-g_n(x)) \ge 0\}$ converges to a region of the
 form $S_{\alpha}(f) = \{ x | f(x) \ge \alpha\}$, such that $P(S_{\alpha}(f)) = 1 - \nu$.
 Therefore,  the Support Neighbour Machine estimates a density contour cluster $S_{\alpha}(f)$ (around the mode).
 \end{theorem}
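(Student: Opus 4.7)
The plan is to prove the theorem in two stages: first showing that the empirical threshold $\rho_n^*$ converges to a deterministic limit, and then using the sparsity-measure property to translate sublevel sets of $g_n$ into superlevel sets of $f$. The key structural fact we exploit is that a sparsity measure, by definition, orders points in the reverse of $f$ asymptotically, so a threshold in $g_n$-space corresponds (up to an order-reversing transformation) to a threshold in $f$-space.

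First I would analyse the limiting ordering induced by $g_n$. Because $g_n$ is a sparsity measure drawn on $s_n \sim f$, for any two points $x, y$ with $f(x) < f(y)$ we have $P(g_n(x,s_n) > g_n(y,s_n)) \to 1$. Hence the collection $\{g_n(x_i,s_n)\}_{i=1}^n$ is asymptotically obtained from $\{f(x_i)\}_{i=1}^n$ by a strictly decreasing transformation, and its empirical distribution converges to some CDF $F_g$ whose $\nu$-quantile $\rho^{\ast}$ is in one-to-one correspondence with a threshold $\alpha$ on $f$. The quantity $\rho_n^* = g_n(x_{(\nu n)})$ is precisely the empirical $\nu$-quantile, so by Glivenko--Cantelli plus continuity of $F_g$ at $\rho^\ast$ we obtain $\rho_n^* \to \rho^\ast$ almost surely.

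Second, I would identify the limit of $R_n$. Combining the reverse-monotonic relationship between $g_n$ and $f$ with the convergence $\rho_n^* \to \rho^\ast$, a point $x$ with $f(x) \neq \alpha$ lies eventually in $R_n = \{g_n \le \rho_n^*\}$ if and only if $f(x) \ge \alpha$. To translate this into set-convergence $\mu(R_n \triangle S_\alpha(f)) \to 0$, I would argue that the boundary $\{f = \alpha\}$ has $\mu$-measure zero (assuming $f$ has no flat level at height $\alpha$, a mild regularity condition), so the pointwise statement above upgrades to convergence of the indicator functions in $L_1(\mu)$. The calibration $P(S_\alpha(f)) = 1-\nu$ then follows by matching the mass of the limiting set with the fraction of points excluded by the rule.

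The main obstacle will be upgrading the pointwise-in-probability sparsity property into control of the full symmetric difference. A clean route is to avoid reasoning set-by-set and instead route everything through distribution functions: prove uniform convergence of the empirical CDF of $\{g_n(x_i)\}$ to $F_g$ by a Glivenko--Cantelli type argument (this handles both the randomness of $\rho_n^*$ and the randomness of $g_n$ simultaneously), and then use continuity of $F_g$ at $\rho^\ast$ to pass from quantile convergence to set convergence. The regularity hypotheses needed -- continuity of $F_g$ at its $\nu$-quantile and no flat plateau of $f$ at level $\alpha$ -- are standard and mild, and under them the proof closes with $\mu(R_n \triangle S_\alpha(f)) \to 0$ as claimed.
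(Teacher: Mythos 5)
The paper does not actually prove this theorem: Appendix~C states it and defers entirely to the cited references \citep{bib:munoz2006,bib:naiveoneclass}, so the comparison here is with the argument given there rather than with an in-paper proof. Judged on its own merits, your proposal has a genuine gap at its central step. The definition of a sparsity measure is only a \emph{pairwise}, asymptotic-in-probability ordering statement: $f(x)<f(y)$ implies $P(g_n(x,s_n)>g_n(y,s_n))\to 1$. It does not supply a deterministic order-reversing transformation linking $g_n$ to $f$, nor a limiting CDF $F_g$ of the values $\{g_n(x_i,s_n)\}_i$ with a continuous $\nu$-quantile. The paper's own Example~1 already defeats that device: for the $k$-th nearest neighbour distance with fixed $k$ one has $d_k(x,s_n)\to 0$ for every $x$ in the support, so the empirical distribution of the $g_n$-values collapses to a point mass at $0$, $F_g$ is degenerate, continuity at $\rho^{\ast}$ fails, and convergence of $\rho_n^{\ast}$ (indeed $\rho_n^{\ast}\to 0$) carries no information about which points end up in $R_n$. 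Two further technical points compound this: the values $g_n(x_i,s_n)$ are not i.i.d.\ (each depends on the whole sample), so Glivenko--Cantelli does not apply off the shelf; and the pairwise property, holding for each fixed pair separately with unspecified rate, gives no uniformity over the $O(n^2)$ pairs, which is exactly what you need to claim that the $(1-\nu)n$ retained sample points are asymptotically the highest-density ones.

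The argument in the cited references is organised differently, and in a way that avoids the degenerate-quantile problem: by construction of $\rho_n^{\ast}$ the region $R_n$ captures exactly a fraction $1-\nu$ of the sample, and the sparsity property (together with regularity of $f$, e.g.\ continuity and no flat plateau at the level $\alpha$ defined by $P(S_\alpha(f))=1-\nu$) is used to show that, asymptotically, the captured points are those of highest density; consistency of such minimum-volume-set / level-set estimators then yields $\mu(R_n \triangle S_\alpha(f))\to 0$, i.e.\ $R_n$ converges to the density contour cluster of mass $1-\nu$. The second half of your sketch (negligible boundary $\{f=\alpha\}$, upgrading pointwise statements to $L_1$ convergence of indicators) is sensible and reappears in that argument, but it must be fed by this set-level reasoning rather than by quantile convergence of $\rho_n^{\ast}$ to a quantile of a fixed limiting distribution, which in general does not exist for the class of sparsity measures the paper allows.
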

 
 Theorem \ref{th3} (see details in
  \citep{bib:munoz2006,bib:naiveoneclass}) ensures the convergence of the empirical estimation of the proposed distance. When the sample size increases, we are able to determine with more precision the sets $ A_i(\mathbb{P})$ and $ A_i(\mathbb{Q})$ and therefore $\hat d_{\boldsymbol{\alpha,\beta}}(\mathbb{P},\mathbb{Q})\xrightarrow{n} d_{\boldsymbol{\alpha,\beta}}(\mathbb{P},\mathbb{Q})$.
 
\section*{D) Extensions of the weighting scheme}
We present in this appendix alternative weighting schemes to LS(1):
\begin{enumerate}
\item Radius of the set $A_i(\mathbb{P})\triangle A_i(\mathbb{Q})$: Choose $w_i$ in Eq. \eqref{eq:alpha-distance} by:
\begin{equation}
\nonumber
w_i=\frac{1}{m}\max\limits_{x\in s_{\hat A_i(\mathbb{P})}, y\in s_{\hat A_i(\mathbb{Q})}} \left\{( 1- I_{r_{\hat A_i(\mathbb{P})},r_{\hat A_i(\mathbb{Q})}}(x,y))
\parallel x - y\parallel_2\right\}.
\end{equation}
\item Hausdorff distance between the sets 
$$A_i(\mathbb{P}) - A_i(\mathbb{Q})$$ 
and 
$$A_i(\mathbb{Q}) - A_i(\mathbb{P}):$$ 
Choose $w_i$ in Eq. \eqref{eq:alpha-distance} by:
\begin{equation}
\nonumber
w_i=\frac{1}{m}\hat H\left(s_{\hat A_i(\mathbb{Q})} - \hat A_i(\mathbb{P}), s_{\hat A_i(\mathbb{P})}- \hat A_i(\mathbb{Q}) \right),
\end{equation}
where $\hat H(\hat X,\hat Y)$ denotes the Hausdorff distance (finite size version) between finite sets $\hat X$ and $\hat Y$ (which estimates the `theoretical' Hausdorff distance between space regions $X$ and $Y$). In this case $X = A_i(\mathbb{P}) - A_i(\mathbb{Q})$ and
$Y = A_i(\mathbb{Q}) - A_i(\mathbb{P})$.
\end{enumerate}


\end{document}